\renewcommand{\arraystretch}{1.2}
\DeclareMathOperator{\E}{\mathsf{E}}   
\DeclareMathOperator{\Var}{\mathsf{Var}}   
\DeclareMathOperator{\Cov}{\mathsf{Cov}}   
\newcommand\independent{\protect\mathpalette{\protect\independenT}{\perp}}
\def\independenT#1#2{\mathrel{\rlap{$#1#2$}\mkern2mu{#1#2}}}
\newcolumntype{L}[1]{>{\raggedright\let\newline\\\arraybackslash\hspace{0pt}}m{#1}}
\newcolumntype{C}[1]{>{\centering\let\newline\\\arraybackslash\hspace{0pt}}m{#1}}
\newcolumntype{R}[1]{>{\raggedleft\let\newline\\\arraybackslash\hspace{0pt}}m{#1}}
\begin{document}

\newtheorem{theorem}{Theorem}[section]
\newtheorem{proposition}[theorem]{Proposition}
\newtheorem{example}{Example}[section]

\pagenumbering{arabic}

\title{Analysis of a five-factor capital market model}
\date{September, 2017}
\author{S{\o}ren Fiig Jarner and Michael Preisel \\[3mm] Quantitative Research \\ Danish Labour Market Supplementary Pension Fund (ATP)}
\maketitle

\begin{abstract}
In this paper we analyse the five-factor capital market model of \cite{munetal04}. The model features a Vasicek interest rate model, an equity index with mean-reverting excess return and an index for realized inflation with mean-reverting expectation. The primary aim of the analysis is to facilitate so-called exact simulation from the model on a set of discrete time points. It turns out that this can be achieved by sampling from a (degenerate) seven-dimensional normal distribution. We derive the distributional results necessary and describe how to overcome the rank deficiency of the variance-covariance matrix in practice.

The tradeable assets in the original model consist of cash, nominal bonds and stocks. We extend the investment universe to also include inflation bonds by deriving the arbitrage free break-even inflation (BEI) curve for a three-parameter specification of the two market prices of inflation risk. Finally, we provide a number of auxiliary results regarding the dynamics of constant-maturity nominal and inflation bond indices, the distribution of the stock index in nominal and real terms, and the distribution of the Sharpe ratio for individual assets and portfolios with an application to factor investing.
\end{abstract}

\bigskip

\noindent\emph{Keywords:} Capital market model, exact simulation, break-even inflation, Sharpe ratio, factor investing.

\bigskip

\thispagestyle{empty}
\newpage

\section{Introduction}
In this paper we analyse a slight variation of the continuous-time capital market model of \cite{munetal04}. The model consists of a nominal (short) rate, an equity index with mean-reverting risk premium, and an inflation index with mean-reverting rate of inflation. This gives a total of five state variables (factors), and we consequently refer to the model as a five-factor model. The model provides an analytically tractable, yet flexible, framework for joint modelling of nominal and real yields.

In \cite{munetal04} the model is used to find the optimal asset allocation strategy of a power utility investor. The investor can invest in cash, nominal bonds, and stocks and he seeks to optimize the utility of his real wealth on a given horizon, i.e.\ his nominal wealth divided by the value of the inflation index at the horizon. Under this model, investors with long horizons should invest a higher fraction of their wealth in stocks than investors with shorter horizons, and investors with higher risk aversion should have a higher bonds to stocks ratio than more risk seeking investors. Although these characteristics correspond to common financial recommendations of investment advisers, they do not arise in basic models of portfolio choice.\footnote{In particular, standard mean-variance two-fund separation results prescribe that all investors should hold the same mix of risky assets (same bonds to stocks ratio) and that the allocation to risky assets should remain constant regardless of the horizon.} In this respect, the model reconciles practical and theoretical recommendations.

In contrast to \cite{munetal04} the applications we have in mind are not optimization per se, but rather evaluation of different investment strategies, i.e.\ calculation of terminal wealth of different investment strategies in both real and nominal terms. In typical applications the investor makes periodic deposits to his (investment) account and the account is invested following a specific strategy. Due to the inflow of funds at discrete points in time it is not possible to obtain a closed form solution to terminal wealth. Even in the simplest case, terminal wealth takes the form of a sum of log-normal distributions, which can at best be approximated analytically. However, the (exact) distribution of the investor's wealth prior to his next deposit can be calculated and by successive simulations from these distributions it is possible to obtain samples from the terminal wealth distribution. The primary aim of the paper is to provide the formulas necessary to perform this so-called exact simulation of terminal wealth. As the name suggests exact simulation provides samples from the (exact) theoretical distribution, as opposed to approximate samples obtained e.g.\ from a discretized version of the diffusion processes. 

Simpler versions of the model have previously been analysed in \cite{jarpre16} and \cite{jorsli16} also with the aim of exact simulation. In \cite{jarpre16} a three-factor model for nominal rates and inflation (but without stocks) is analysed, and in \cite{jorsli16} a three-factor model for nominal rates and stocks (but without inflation) is analysed. The five-factor model can be seen as a combination of these two models. The model of \cite{marmil14} is also close in spirit, but strictly speaking neither a specialization nor a generalization of the present model.

The investable assets in \cite{munetal04} consist of cash, nominal bonds and stocks. However, if the investor is concerned about the real value of his (terminal) wealth it is also of interest to include the possibility for him to invest directly in real assets. With these applications in mind, we include a model for break-even inflation (BEI) and we derive the dynamics of inflation bond prices. The inclusion of inflation bonds as investable financial assets represents an extension to the original model of \cite{munetal04}. The term structure of nominal interest rates is of the form considered by \cite{vas77}, while the BEI curves are similar in structure but more involved due to two sources of randomness (rather than one) and the possibility of non-zero correlation between nominal rates and inflation.

The rest of the paper is organised as follows. In Section~\ref{sec:model} the dynamics of the model is specified and Section~\ref{sec:intrep} derives an integral representation of the state variables, which will be used throughout; Section~\ref{sec:exactsim} contains the results necessary for exact simulation, including easily verifiable conditions for the model being well-specified; Section~\ref{sec:bonds} adds break-even inflation to the model and derives the dynamics of nominal and inflation bond prices and indices; Section~\ref{sec:stockinfl} derives the distributions of the nominal and real value of stocks over time, which are useful for analytic investigation of model properties; Section~\ref{sec:portfolio} derives the Sharpe ratio distribution for a portfolio and gives an application to factor investing; Section~\ref{sec:concluding} offers a few concluding remarks; finally, Appendix~\ref{app:computation} contains  the computation of (part of) the variance-covariance matrix needed for the exact simulation scheme of Section~\ref{sec:exactsim}.

We aim at providing general formulas valid also in the limiting cases of vanishing volatility terms and vanishing mean reversion parameters, and for this reason most formulas are stated in terms of a number of auxiliary functions defined in Appendix~\ref{app:computation}.

\section{Model specification} \label{sec:model}
The aim of the model is to capture a number of stylized capital market features in a succinct and tractable way. In particular, correlated rates, equity returns and (price) inflation, a term-structure of interest rates and mean-reverting equity returns. The latter allows for a specification where the ``riskiness'' of stocks increases slower over time than under the standard Black-Scholes model, i.e.\ the mathematical equivalent of the intuition that stocks have a tendency to rebound from crashes. In Section~\ref{sec:bonds} we will also add break-even inflation to the model.

It is important to note that the dynamics specified below describe the so-called 'real world' dynamics of the state variables of the model (sometimes referred to as the $P$-dynamics as opposed to the $Q$-dynamics used for pricing). The price of nominal bonds (and later inflation bonds) can be derived from these dynamics when in addition we specify market prices of risk for the various risk sources.

\subsection{Short rate, inflation and stock returns}
We assume that the short (nominal) interest rate follows an Ornstein-Uhlenbeck process,
\begin{align}  \label{eq:shortrate}
   dr_t = \kappa(\bar{r} - r_t)dt + \sigma_r dW^r_t,
\end{align}
where $\bar{r}$ is the long-run mean of the short interest rate, $\kappa$ describes the degree of mean reversion, $\sigma_r$ is the interest rate volatility, and $W^r$ is a standard Brownian motion.

The stock index (total return index) is assumed to evolve according to the dynamics
\begin{align}  \label{eq:stockindex}
   \frac{dS_t}{S_t} = (r_t + x_t)dt + \sigma_S dW^S_t,
\end{align}
where $r_t$ is the short rate from (\ref{eq:shortrate}), $x_t$ is the time-varying expected excess return (or risk premium) from investing in stocks, $\sigma_S$ is the stock index volatility, and $W^S$ is a standard Brownian motion. We further assume that the excess return follows an Ornstein-Uhlenbeck process,
\begin{align}  \label{eq:excessreturn}
   dx_t = \alpha(\bar{x} - x_t)dt - \sigma_x dW^S_t,
\end{align}
where $\bar{x}$ denotes the long-run equity risk premium, $\alpha$ describes the degree of mean reversion towards this level, and $\sigma_x$ is the excess return volatility. Note that the stock index and the expected return processes are locally perfectly negatively correlated, i.e.\ a stock return above or below its expected value will ``cause'' a change in the (future) expected return in the opposite direction. This interaction induces a mean reversion in the stock returns over time.

We also introduce a price index process, $I$, interpreted as the nominal price of a real consumption good. We assume that $I$ evolves according to the dynamics
\begin{align}  \label{eq:inflindex}
   \frac{dI_t}{I_t} = \pi_t dt + \sigma_I dW^I_t,
\end{align}
with
\begin{align}   \label{eq:expectedinfl}
   d\pi_t = \beta(\bar{\pi} - \pi_t)dt + \sigma_\pi dW^{\pi}_t,
\end{align}
where $\pi_t$ is the expected rate of inflation, $\bar{\pi}$ is the long-run mean inflation rate, $\beta$ describes the degree of mean reversion, $\sigma_\pi$ is the volatility of the expected inflation rate, $\sigma_I$ is the volatility of the price index, and $W^I$ and $W^\pi$ are standard Brownian motions. The price index is thus influenced by both expected inflation and unexpected inflation shocks, where the expected inflation forms a persistent process while the shocks are uncorrelated.

In summary, the model has five stochastic quantities:
\begin{itemize}
  \item The short nominal rate ($r_t$)
  \item The equity index ($S_t$)
  \item The equity risk premium ($x_t$)
  \item The inflation index ($I_t$)
  \item The expected inflation ($\pi_t$)
\end{itemize}
Consequently, we refer to the model as a five-factor model. The model is driven by four Brownian motions and we assume joint normality of the quadruple $(W^r,W^S,W^\pi,W^I)$. To simplify matters slightly we will assume that unexpected inflation shocks are independent of the other driving Brownian motions, i.e.\ $W^I \independent (W^r,W^S,W^\pi)$, but apart from that we allow for a general dependency structure.

\subsection{Term structure of interest rates} \label{sec:TS}
To complete the model we also need to specify the evolution of the yield curve, i.e.\ the term structure of interest rates. Since the model is formulated in the language of financial mathematics it seems natural to conform with academic practice and require an arbitrage free model. It follows from arbitrage theory that the term structure is uniquely identified by the short rate dynamics under $P$, i.e.\ equation (\ref{eq:shortrate}), and the assumed market price of interest rate risk, $\lambda^r = \lambda^r(t,r_t)$, see e.g.\ Chapter 21 of \cite{bjo09}.

Assuming a linear market price of risk it can be shown that the term structure of interest rates is of the form considered by \cite{vas77}. In particular, the price at time $t$ of a zero-coupon bond maturing at time $T\geq t$ is given by
\begin{align}
    p_t(T) = \exp\left\{G(\Delta) - H(\Delta)r_t\right\},  \label{eq:VasZCBPrice}
\end{align}
with $\Delta = T-t$,
\begin{align}
  H(\Delta) & = \frac{1}{a}\left(1-\exp\{-a \Delta\}\right), \\
  G(\Delta) & = \left(b - \frac{\sigma_r^2}{2a^2}  \right)\left(H(\Delta)-\Delta \right) - \frac{\sigma_r^2}{4a}H^2(\Delta), \label{eq:VasGdef}
\end{align}
and where $a$ and $b$ are parameters controlling the slope and level of the yield curves.\footnote{The chosen parametrization corresponds to the market price of risk being of the form $\lambda^r_t = \{(a-\kappa)r_t + \kappa \bar{r} - a b\}/\sigma_r$. The price of a zero-coupon bond can be obtained by the usual risk-neutral valuation formula, where the risk-neutral interest rate dynamics are given by $dr_t = a(b-r_t)dt + \sigma_r d\bar{W}^r_t$. It is often assumed that the market price of interest rate risk is constant, in which case Ornstein-Uhlenbeck dynamics of $r$ under $P$ implies a Vasicek term structure. In general, however, the latter implication does not hold; although this is sometimes (incorrectly) stated.}

The continuously compounded zero-coupon yield for the period $[t,T]$, $r_t(T)$, is defined by the relation $p_t(T) = \exp\{-(T-t)r_t(T)\}$. It follows from (\ref{eq:VasZCBPrice})--(\ref{eq:VasGdef}) that
\begin{align}
  r_t(T) = \frac{1-\exp\{-a \Delta\}}{a \Delta}r_t + \left(b - \frac{\sigma_r^2}{2a^2}  \right)\left(1- \frac{H(\Delta)}{\Delta} \right) +  \frac{\sigma_r^2}{4a}\frac{H^2(\Delta)}{\Delta},   \label{eq:VasYield}
\end{align}
where $\Delta = T-t$. Since $H(\Delta)$ is uniformly bounded, it further follows from (\ref{eq:VasYield}) that all yield curves have the same asymptote
\begin{align}
  r_t(\infty) = \lim_{T\to\infty}r_t(T) =  b - \frac{\sigma_r^2}{2a^2}.
\end{align}
It is easy to show that this is also the asymptotic value of the forward rate, $f_t(T)$, defined by $f_t(T) = - \partial \log p_t(T)/ \partial T$.

Returning to (\ref{eq:VasYield}), we see that $a$ determines how strongly the short rate influences yields at longer maturities and thereby the curve steepness. Large values of $a$ imply fast convergence to the asymptotic value and hence steep yield curves, while low values of $a$ imply slow convergence and hence flat yield curves.  In Section~\ref{sec:bonds} we return to the dynamics of bond prices, and we also extend the model to include inflation bonds.

\subsection{Comments}
Except for two minor differences the model above is the same as the one used in \cite{munetal04}. We allow for a linear market price of interest rate risk, which is slightly more general than the constant market price used by \cite{munetal04}. This gives us one extra parameter for modelling yield curves. On the other hand, we limit the dependency structure slightly by assuming independence between unexpected inflation shocks ($dW^I$) and the remaining driving Brownian motions. We believe this assumption is justifiable both on theoretical grounds and from the empirical estimates provided by \cite{munetal04}, where the correlations between unexpected inflation and the other three sources of randomness are all estimated to be very close to zero.

\section{Integral representations}  \label{sec:intrep}
The primary aim of this paper is to provide the distributional results necessary for exact simulation from the model and for analytical investigation of the model's properties. The paper is also intended to serve as documentation for the derivation of these results. The results in the paper are most easily established from the following integral representations of the state variables.

For $t\geq 0$ we have
\begin{align}
  r_t & = \bar{r} + e^{-\kappa t}(r_0 - \bar{r}) + \sigma_r \int_0^t e^{-\kappa(t-s)}dW^r_s, \label{eq:rsol} \\[2mm]
  S_t & = S_0 \exp\left\{\int_0^t r_s ds + \int_0^t x_s ds - \frac{\sigma_S^2}{2}t + \sigma_S W^S_t \right\}, \label{eq:Ssol} \\[2mm]
  x_t & = \bar{x} + e^{-\alpha t}(x_0 - \bar{x}) - \sigma_x \int_0^t e^{-\alpha(t-s)}dW^S_s, \label{eq:xsol} \\[2mm]
  I_t & = I_0 \exp\left\{\int_0^t \pi_s ds - \frac{\sigma_I^2}{2}t  + \sigma_I W^I_t  \right\}, \label{eq:Isol} \\[2mm]
  \pi_t & = \bar{\pi} + e^{-\beta t}(\pi_0 - \bar{\pi}) + \sigma_\pi \int_0^t e^{-\beta(t-s)}dW^\pi_s,  \label{eq:pisol}
\end{align}
where we assume that $W^S_0 \equiv W^I_0 \equiv 0$.

The integral representations can easily be verified by It\^{o}'s lemma. For instance, to verify (\ref{eq:rsol}) write $r_t = F(t,Y_t)$, with $F(t,y)= \bar{r}+e^{-\kappa t}[(r_0-\bar{r}) + y]$ and $Y_t = \sigma_r \int_0^t e^{\kappa s}dW^r_s$. Using that $e^{-\kappa t}dY_t =
e^{-\kappa t} \sigma_r e^{\kappa t}dW^r_t = \sigma_r dW^r_t$, it follows by It\^{o}'s lemma that
\begin{align*}
   dr_t & = \frac{\partial F}{\partial t}dt + \frac{\partial F}{\partial y}dY_t + \frac{1}{2}\frac{\partial^2 F}{\partial y^2}d\langle Y, Y \rangle _t  \\[2mm]
        & = -\kappa(r_t - \bar{r})dt + e^{-\kappa t}dY_t + 0   \\[2mm]
        & = \kappa(\bar{r} - r_t)dt + \sigma_r dW^r_t.
\end{align*}
Since the right-hand side of (\ref{eq:rsol}) evaluates to $r_0$ for $t=0$ and since the differentials match we conclude, assuming sufficient regularity, that (\ref{eq:rsol}) is the solution to (\ref{eq:shortrate}).

To verify the expression for the stock index write (\ref{eq:Ssol}) as $S_t = G(t,U_t,V_t,W^S_t)$, where
$U_t = \int_0^t r_sds$, $V_t = \int_0^t x_s ds$, and
\begin{align*}
  G(t,u,v,w)=S_0\exp\left\{u+v-\frac{\sigma_S^2}{2}t + \sigma_S(w-W^S_0)\right\}.
\end{align*}
Using that $dU_t = r_t dt$, $dV_t = x_t dt$ and $d\langle W^S,W^S \rangle_t = dt$ it follows by It\^{o}'s lemma that
\begin{align*}
   dS_t & = \frac{\partial G}{\partial t}dt + \frac{\partial G}{\partial u}dU_t + \frac{\partial G}{\partial v}dV_t + \frac{\partial G}{\partial w}dW^S_t + \frac{1}{2}\frac{\partial^2 G}{\partial w^2}d\langle W^S, W^S \rangle _t  \\[2mm]
        & = -\frac{\sigma_S^2}{2}S_t dt + S_t r_t dt + S_t x_t dt + \sigma_S S_t dW^S_t + \frac{\sigma_S^2 }{2} S_t dt \\[2mm]
        & = S_t \left[ (r_t+x_t)dt + \sigma_S dW^S_t \right],
\end{align*}
which is the same as (\ref{eq:stockindex}). Since the right-hand side of (\ref{eq:Ssol}) equals $S_0$ for $t=0$ we conclude, assuming sufficient regularity, that (\ref{eq:Ssol}) is the solution to (\ref{eq:stockindex}). The three remaining expressions can be verified similarly.

\subsection{Integrated state variables}
In order to calculate the covariances of the joint distribution it is convenient to first obtain alternative expressions for the integrated short rate, excess return and expected inflation. By Fubini's theorem for stochastic integrals we can write
\begin{align}
   \int_0^t r_s ds
   & =  t \bar{r} + (r_0-\bar{r})\int_0^t e^{-\kappa s}ds + \sigma_r \int_0^t \left(\int_0^s e^{-\kappa(s-u)}dW^r_u \right)ds \nonumber \\[2mm]
   & =  t \bar{r} + (r_0-\bar{r})\int_0^t e^{-\kappa s}ds + \sigma_r \int_0^t \left(\int_u^t e^{-\kappa(s-u)}ds \right)dW^r_u \nonumber \\[2mm]
   & =  t \bar{r} + (r_0-\bar{r})\int_0^t e^{-\kappa s}ds + \sigma_r \int_0^t \left(\int_0^{t-s} e^{-\kappa u}du \right)dW^r_s \nonumber \\[2mm]
   & =  t \bar{r} + (r_0-\bar{r})\Psi(\kappa,t) + \sigma_r \int_0^t \Psi(\kappa,t-s) dW^r_s, \label{eq:intrsol}
\end{align}
where
\begin{align} \label{eq:psidef}
  \Psi(\kappa,t) \equiv \int_0^t e^{-\kappa u}du =
  \begin{cases}
  t & \mbox{for } \kappa = 0, \\
  \frac{1}{\kappa}\left(1-e^{-\kappa t} \right) & \mbox{for } \kappa \neq 0. \\
  \end{cases}
\end{align}

Similarly, we have
\begin{align}
   \int_0^t x_s ds  & =  t \bar{x} + (x_0-\bar{x})\Psi(\alpha,t) - \sigma_x \int_0^t \Psi(\alpha,t-s) dW^S_s,  \label{eq:intxsol} \\[2mm]
   \int_0^t \pi_s ds  & =  t \bar{\pi} + (\pi_0-\bar{\pi})\Psi(\beta,t) + \sigma_\pi \int_0^t \Psi(\beta,t-s) dW^\pi_s. \label{eq:intpisol}
\end{align}

Now, it follows from (\ref{eq:xsol}) that for all values of $\alpha$ and $\sigma_x$, including one or both of them zero,
\begin{align*}
    x_t - \E[x_t] & = - \sigma_x \int_0^t e^{-\alpha(t-s)}dW^S_s \\[2mm]
                  & = - \sigma_x W^S_t + \sigma_x \int_0^t 1 - e^{-\alpha(t-s)}dW^S_s \\[2mm]
                  & = - \sigma_x W^S_t + \sigma_x \alpha \int_0^t \Psi(\alpha,t-s) dW^S_s.
\end{align*}
In combination with (\ref{eq:intxsol}) this yields the relation
\begin{align}  \label{eq:linrel}
  \sigma_x W^S_t =  -\left(x_t - \E[x_t]\right) - \alpha\left(\int_0^t x_s ds - \E\left[\int_0^t x_s ds \right] \right).
\end{align}
Relation (\ref{eq:linrel}) implies that the variance-covariance matrix of the triplet $(x_t,\int_0^t x_sds,W^S_t)$ has rank at most two. If $\sigma_x \neq 0$ then $W^S_t$ is given as a linear combination of $x_t$ and $\int_0^t x_s ds$ and the rank is two, while if $\sigma_x = 0$ then $x_t$ and $\int_0^t x_s ds$ are both deterministic and the rank is one.

\section{Exact simulation} \label{sec:exactsim}
Given initial conditions $(r_0,S_0,x_0,I_0,\pi_0)$ and a sequence of increasing time points $0=t_0 < t_1 < ... < t_n$ the aim is to obtain samples from the state process observed at the discrete set of time points, i.e.\ to obtain samples of the form $\{ (r_{t_i},S_{t_i},x_{t_i},I_{t_i},\pi_{t_i}) : i=1,\ldots,n\}$. Due to the strong Markov property, this can be achieved if we can simulate successively over each time period from the joint distribution of $X_{t_i}$ given $X_{t_{i-1}}$, where $X_t=(r_t,S_t,x_t,I_t,\pi_t)'$. Due to time-homogeneity, this in turn can be achieved if we can simulate from $X_t$ given $X_0=x_{t_{i-1}}$ with $t=t_i - t_{i-1}$. By the integral representations (\ref{eq:Ssol}) and (\ref{eq:Isol}) of the stock and inflation indices, respectively, we see that samples from $X_t$ can be obtained from samples of
\begin{align}
   Y_t = \left(r_t, \int_0^t r_s ds, x_t, \int_0^t x_sds, \pi_t, \int_0^t\pi_sds, W^S_t\right)',  \label{eq:Ydef}
\end{align}
and independent draws from $W^I_t \stackrel{\mathcal{D}}{=} \sqrt{t}U$, where $U \sim N(0,1)$. The task then is to provide the (conditional) distribution of $Y_t$ given $X_0=x_0$.
This is aided by the fact that $Y_t$ follows a multivariate normal distribution, and hence we need only compute the mean vector and the variance-covariance matrix of $Y_t$.
With $Y$ being seven-dimensional this amounts to the computation of 35 quantities, i.e.\ 7 mean terms, 7 variances and 21 covariances.

\subsection{Conditional distribution}
The mean vector is readily available from the integral representations of the components of $Y_t$, while a detailed description of how to compute the variance-covariance matrix can be found in Appendix \ref{app:computation}.
For ease of reference we collect the results in the following proposition.

\begin{proposition}  \label{prop:conddist}
Let the dynamics of $r_t$, $S_t$, $x_t$, $I_t$ and $\pi_t$ be given by equations (\ref{eq:shortrate})--(\ref{eq:expectedinfl}) of Section~\ref{sec:model}, and assume that $(W^r_t,W^S_t,W^\pi_t)'$ is a vector Brownian motion with unit variance and correlation matrix
\begin{align}
   \rho =
   \begin{pmatrix}
     1  &  \rho_{rS} &  \rho_{r\pi} \\
     \rho_{rS} & 1 & \rho_{S\pi} \\
   \rho_{r\pi} & \rho_{S\pi} & 1
   \end{pmatrix}.
     \label{eq:rhomat}
\end{align}
Let $Y_t$ be given by (\ref{eq:Ydef}) for $t\geq 0$. The (conditional) distribution of $Y_t$ given initial conditions $(r_0,S_0,x_0,I_0,\pi_0)$ is multivariate normal,
\begin{align}
   Y_t | (r_0,S_0,x_0,I_0,\pi_0) \sim N_7\left( m(t,r_0,S_0,x_0,I_0,\pi_0), \Sigma(t) \right),
\end{align}
with mean vector, $m$, and variance-covariance matrix, $\Sigma$, given by Table~\ref{tab:moments} on page~\pageref{tab:moments}.
\end{proposition}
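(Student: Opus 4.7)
The plan is to establish joint normality first, then compute the mean and variance-covariance entries componentwise using the integral representations from Section~\ref{sec:intrep}. From (\ref{eq:rsol}), (\ref{eq:xsol}), (\ref{eq:pisol}) and their integrated versions (\ref{eq:intrsol})--(\ref{eq:intpisol}), each coordinate of $Y_t$ has the form ``deterministic function of $t$ and of the initial values $r_0,x_0,\pi_0$, plus an It\^o integral of a deterministic integrand with respect to one of the Brownian motions $W^r$, $W^S$, $W^\pi$''; the last coordinate $W^S_t$ itself fits this template with integrand $1$. Since $(W^r,W^S,W^\pi)$ is a correlated vector Brownian motion, any finite family of such stochastic integrals is jointly Gaussian (approximate in $L^2$ by Riemann sums of jointly normal Brownian increments and pass to the limit). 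Hence $Y_t\mid X_0$ is (possibly degenerate) multivariate normal, and it remains only to identify its mean vector and $7\times 7$ variance-covariance matrix.

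The mean is immediate: take expectations in (\ref{eq:rsol})--(\ref{eq:pisol}) and (\ref{eq:intrsol})--(\ref{eq:intpisol}), using that It\^o integrals of deterministic integrands have mean zero and that $\E[W^S_t]=0$. This yields all seven entries of $m$ directly in terms of $t$, $r_0$, $x_0$, $\pi_0$; the arguments $S_0$ and $I_0$ are carried only for notational alignment with $X_0$, as no component of $Y_t$ depends on them.

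For the variance-covariance matrix I would read off the pair (integrand, driving Brownian motion) for each coordinate: $(\sigma_r e^{-\kappa(t-s)},W^r)$ and $(\sigma_r\Psi(\kappa,t-s),W^r)$ for the two rate components, $(-\sigma_x e^{-\alpha(t-s)},W^S)$ and $(-\sigma_x\Psi(\alpha,t-s),W^S)$ for the two risk-premium components, $(\sigma_\pi e^{-\beta(t-s)},W^\pi)$ and $(\sigma_\pi\Psi(\beta,t-s),W^\pi)$ for the two inflation components, and $(1,W^S)$ for $W^S_t$. The covariance of two coordinates whose integrands are $f$ (driven by $W^i$) and $g$ (driven by $W^j$) is then
\begin{equation*}
   \rho_{ij}\int_0^t f(s)\,g(s)\,ds,
\end{equation*}
by the It\^o isometry for correlated Brownian motions, where $\rho_{ij}$ is read from (\ref{eq:rhomat}). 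This reduces all 28 second-moment entries to elementary integrals of products of exponentials and $\Psi$-functions on $[0,t]$, multiplied by the appropriate correlation factor for cross-group pairs.

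The main obstacle is not probabilistic but algebraic: producing closed-form expressions for these integrals that remain valid in the limits $\kappa\to 0$, $\alpha\to 0$, $\beta\to 0$ and when any two mean-reversion parameters coincide, where naive evaluation produces $0/0$ forms. This is exactly why Appendix~\ref{app:computation} introduces auxiliary functions that interpolate between the generic and degenerate cases. The remaining work is a systematic enumeration, grouped as (rate, rate), (risk-premium, risk-premium), (inflation, inflation) and the three cross-groups scaled respectively by $\rho_{rS}$, $\rho_{r\pi}$, $\rho_{S\pi}$, with the row and column for $W^S_t$ treated as the special case $f\equiv 1$. The rank deficiency noted after (\ref{eq:linrel}) is then visible in the resulting $\Sigma(t)$ and corresponds to the normal distribution being supported on a proper affine subspace of $\mathbb{R}^7$.
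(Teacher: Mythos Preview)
Your proposal is correct and follows essentially the same route as the paper's own proof: read off the means from the integral representations (\ref{eq:rsol})--(\ref{eq:pisol}) and (\ref{eq:intrsol})--(\ref{eq:intpisol}), and reduce all (co)variances via the It\^o isometry to integrals of the five product types $e^{-\kappa s}$, $\Psi(\kappa,s)$, $\Psi^2(\kappa,s)$, $e^{-\kappa_1 s}\Psi(\kappa_2,s)$, $\Psi(\kappa_1,s)\Psi(\kappa_2,s)$, which are then expressed through the auxiliary functions $\Psi$, $\Theta$, $\Upsilon$, $\Gamma$, $\Lambda$ of Appendix~\ref{app:computation}. Your explicit argument for joint Gaussianity via $L^2$-approximation is a welcome addition that the paper leaves implicit.
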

\begin{proof}
The expressions for the mean follows directly from equations (\ref{eq:rsol}), (\ref{eq:xsol}), (\ref{eq:pisol}), (\ref{eq:intrsol}), (\ref{eq:intxsol}), and (\ref{eq:intpisol}).

The variances and covariances all take the form of integrals of combinations of $e^{-\kappa_1 t}$ and $\Psi(\kappa_2,t)$. There are five cases to consider: $e^{-\kappa t}$, $\Psi(\kappa,t)$, $\Psi^2(\kappa,t)$, $e^{-\kappa_1 t}\Psi(\kappa_2,t)$, and $\Psi(\kappa_1,t)\Psi(\kappa_2,t)$, yielding expressions in terms of $\Psi$, $\Theta$, $\Upsilon$, $\Gamma$, and $\Lambda$, respectively. Definitions and computations can be found in Appendix \ref{app:computation}.
\end{proof}

\noindent
{\bf Important:} Due to relation (\ref{eq:linrel}), $\Sigma(t)$ is never of full rank. The rank is at most six, and even lower in special cases with one or more of the $\sigma$'s being zero.
The distribution of $Y_t$ is therefore singular when viewed as a 7-dimensional distribution, and care must be taken when simulating from it. Alternatively, one could consider subsets of $Y_t$ with full dimensional distributions, but this leads to a number of special cases. For ease of exposition we prefer the present unified approach.

\bigskip\noindent
Note that the mean vector depends on both $t$ and the initial conditions, while the variance-covariance matrix depends on $t$ only. The latter fact facilitates efficient simulation of the processes on an equidistant grid, say, of size $\delta$. The variance-covariance matrix, $\Sigma(\delta)$, can be computed once and the simulation can proceed stepwise, where in each step it is necessary to update only the mean vector.

Specifically, if $t_i = \delta i$ we have for $i\geq 1$
\begin{align}  \label{eq:Ysim}
   Y_{t_i} = m(\delta,r_{t_{i-1}},S_{t_{i-1}},x_{t_{i-1}},I_{t_{i-1}},\pi_{t_{i-1}}) + \Sigma(\delta)^{1/2}V,
\end{align}
where $V$ is a vector of $7$ i.i.d.\ $N(0,1)$-variates, and $\Sigma(\delta)^{1/2}$ is the Cholesky-decomposition of $\Sigma(\delta)$.\footnote{In fact, since $\Sigma(\delta)$ is not of full rank we can simulate $Y_{t_i}$ using at most 6 i.i.d.\ $N(0,1)$-variates, as shown in the following.} Having simulated $Y_{t_i}$ we then compute
\begin{align}
   S_{t_i} & = S_{t_{i-1}}\exp\left\{ Y_{t_i}^{(2)} +  Y_{t_i}^{(4)} - \frac{\sigma_S^2}{2}\delta + \sigma_S Y_{t_i}^{(7)} \right\},  \\[2mm]
   I_{t_i} & = I_{t_{i-1}} \exp\left\{ Y_{t_i}^{(6)} - \frac{\sigma_I^2}{2}\delta  + \sigma_I \sqrt{\delta}U  \right\},
\end{align}
where $U$ denotes an independent draw from $N(0,1)$. This gives us the value of all five factors at time $t_i$, $(r_{t_i}, S_{t_i}, x_{t_i}, I_{t_i}, \pi_{t_i})$, which will then serve as the conditioning variables for the simulation at time $t_{i+1}$.

In practice, the sketched algorithm might be hard to implement because $\Sigma(\delta)$ is not of full rank, and $\Sigma(\delta)^{1/2}$ is therefore difficult to compute, even though it exists theoretically. In the typical case, where the correlation matrix (\ref{eq:rhomat}) is of full rank, we can proceed as follows. First note that $\Sigma(\delta) = D C(\delta) D$, where $D = \mbox{diag}(\sigma_r,\sigma_r,\sigma_x,\sigma_x,\sigma_\pi,\sigma_\pi,1)$ and $C(\delta)$ equals $\Sigma(\delta)$ with $\sigma_r=\sigma_x=\sigma_\pi=1$. By (\ref{eq:Ysim}) we have
\begin{align}
   Y_{t_i} = m(\delta,r_{t_{i-1}},S_{t_{i-1}},x_{t_{i-1}},I_{t_{i-1}},\pi_{t_{i-1}}) + DZ,
\end{align}
where $Z \sim N_7(0,C(\delta))$, and the task then is to simulate $Z$. By assumption the variance-covariance matrix for the first six components of $Z$, $(Z_1,Z_2,Z_3,Z_4,Z_5,Z_6)$, has full rank\footnote{The claim appears intuitively reasonable, but we will not provide a proof. When the driving Brownian motions are independent, one needs to show $\Gamma^2(\kappa,\kappa,t)<\Upsilon(\kappa,t)\Psi(2\kappa,t)$ for $t>0$ and all $\kappa$.} and we can therefore simulate from this distribution by standard Cholesky-decomposition of $C(\delta)$ with its last row and column removed. Having simulated the first six components we then set $Z_7 = -Z_3 - \alpha Z_4$. It follows by (\ref{eq:linrel}) that $Z$ has the desired distribution. This procedure works also in the case where one or more of the volatilities are zero.

Simulating $Z$ when (\ref{eq:rhomat}) is not of full rank is more challenging. Perhaps the most promising approach is to use (\ref{eq:linrel}), and similar calculations for the other processes, to identify a full dimensional subset of $Z$. For example, if $W^r = c_S W^S + c_\pi W^\pi$ then $Z_1 + \kappa Z_2= c_S(Z_3+\alpha Z_4) + c_\pi(Z_5+\beta Z_6)$ and we can therefore omit $Z_1$. However, we also need to consider special cases such as $W^r=W^S$ and $\kappa=\alpha$ which imply a rank reduction of two, since $Z_1=Z_3$ and $Z_2=Z_4$. We will not attempt to give an algorithm handling all cases.

\begin{sidewaystable}
\centering
\rowcolors{2}{gray!25}{white}
\bgroup
\def\arraystretch{2}
\begin{footnotesize}
\begin{tabular}{l|ccccccc} \hline
    \rowcolor{gray!50}
  $\Cov$  & $r_t$ & $\int_0^t r_sds$ & $x_t$ & $\int_0^t x_sds$ & $\pi_t$ & $\int_0^t \pi_sds$ & $W^S_t$ \\ \hline
  $r_t$               & $\sigma_r^2\Psi(2\kappa,t)$ & & & & & & \\
  $\int_0^t r_s ds$   & $\sigma_r^2\Gamma(\kappa,\kappa,t)$ & $\sigma_r^2\Upsilon(\kappa,t)$ & & & & &  \\
  $x_t$               & $-\sigma_r\sigma_x\rho_{rS}\Psi(\kappa+\alpha,t)$ & $-\sigma_r\sigma_x\rho_{rS}\Gamma(\kappa,\alpha,t)$ & $\sigma_x^2\Psi(2\alpha,t)$ & & & &  \\
  $\int_0^t x_sds$    & $-\sigma_r\sigma_x\rho_{rS}\Gamma(\alpha,\kappa,t)$ & $-\sigma_r\sigma_x\rho_{rS}\Lambda(\kappa,\alpha,t)$ & $\sigma_x^2\Gamma(\alpha,\alpha,t)$ & $\sigma_x^2\Upsilon(\alpha,t)$ & & &  \\
  $\pi_t$             & $\sigma_r\sigma_\pi\rho_{r\pi}\Psi(\kappa+\beta,t)$ & $\sigma_r\sigma_\pi\rho_{r\pi}\Gamma(\kappa,\beta,t)$ & $-\sigma_x\sigma_\pi\rho_{S\pi}\Psi(\alpha+\beta,t)$ &
     $-\sigma_x\sigma_\pi\rho_{S\pi}\Gamma(\alpha,\beta,t)$ & $\sigma_\pi^2\Psi(2\beta,t)$ & &  \\
  $\int_0^t\pi_sds$   & $\sigma_r\sigma_\pi\rho_{r\pi}\Gamma(\beta,\kappa,t)$ & $\sigma_r\sigma_\pi\rho_{r\pi}\Lambda(\kappa,\beta,t)$ & $-\sigma_x\sigma_\pi\rho_{S\pi}\Gamma(\beta,\alpha,t)$ &
  $-\sigma_x\sigma_\pi\rho_{S\pi}\Lambda(\alpha,\beta,t)$  & $\sigma_\pi^2\Gamma(\beta,\beta,t)$ & $\sigma_\pi^2\Upsilon(\beta,t)$ &  \\
  $W^S_t$             & $\sigma_r\rho_{rS}\Psi(\kappa,t)$ & $\sigma_r\rho_{rS}\Theta(\kappa,t)$ & $-\sigma_x\Psi(\alpha,t)$ & $-\sigma_x\Theta(\alpha,t)$ & $\sigma_\pi\rho_{S\pi}\Psi(\beta,t)$ &
  $\sigma_\pi\rho_{S\pi}\Theta(\beta,t)$ & $t$ \\ \hline
      \rowcolor{gray!50}
  $\E$               & $\bar{r} + e^{-\kappa t}(r_0 - \bar{r})$ & $t\bar{r}  + \Psi(\kappa,t)(r_0 - \bar{r})$ & $\bar{x} + e^{-\alpha t}(x_0 - \bar{x})$ & $t\bar{x}  + \Psi(\alpha,t)(x_0 - \bar{x})$ &
                        $\bar{\pi} + e^{-\beta t}(\pi_0 - \bar{\pi})$ & $t\bar{\pi}  + \Psi(\beta,t)(\pi_0 - \bar{\pi})$ & 0 \\ \hline
\end{tabular}
\end{footnotesize}
\egroup
\caption{Conditional mean and variance-covariance matrix of $Y_t$ given initial conditions $X_0=(r_0,S_0,x_0,I_0,\pi_0)$, cf.\ Proposition~\ref{prop:conddist}. The matrix is symmetric, but only elements at or below the diagonal is shown to increase readability and to save space. The matrix is stated in terms of model parameters defined in equations (\ref{eq:shortrate})--(\ref{eq:expectedinfl}) of Section~\ref{sec:model}, correlation coefficients, $\rho_{rS}$, $\rho_{r\pi}$ and $\rho_{S\pi}$, between the driving Brownian motions, and the auxiliary functions $\Psi$, $\Theta$, $\Upsilon$, $\Gamma$ and $\Lambda$ defined in Appendix~\ref{app:computation}.
The variance-covariance matrix is defined for all valid model parameter values, including vanishing mean reversion parameters and singular correlation structures for the driving processes. In general, it is guaranteed that the variance-covariance matrix is positive-semidefinite, but due to (\ref{eq:linrel}) it is never positive-definite.}
\label{tab:moments}
\end{sidewaystable}

\subsection{Positive-semidefiniteness of a symmetric 3-by-3 matrix}
All parameters of the model can be freely specified, except the three correlation coefficients, which must satisfy that $\rho$ of (\ref{eq:rhomat}) constitutes a valid correlation matrix, i.e.\ that it is positive-semidefinite. This condition is easy to check for specific choices of $\rho_{rS}$, $\rho_{r\pi}$ and $\rho_{S\pi}$, e.g.\ by checking that the eigenvalues of $\rho$ are all non-negative, but for practical purposes it is preferable to have a more ``constructive'' condition.

Let us first recall the following definitions and results from linear algebra. Let $A$ be a square matrix. The determinant of a (square) submatrix of $A$ is called a {\bf minor}. It is a {\bf principal minor} if the submatrix is obtained by deleting rows and columns with the same number, and it is a {\bf leading principal minor} if the submatrix is the upper left $n$-by-$n$ corner of $A$ for some $n$. Sylvester's criterion states that a (real) symmetric matrix $A$ is positive-definite if and only if every {\it leading} principal minor is positive. The analogous necessary and sufficient condition for positive-semidefiniteness is that every principal minor of $A$ is non-negative. The latter can be turned into a useful, and easily verifiable, condition for positive-semidefiniteness of $\rho$.

\begin{proposition}  \label{prop:semidef}
Consider real numbers $x$, $y$ and $z$ in $[-1,1]$, and let
\begin{align}
   A =
   \begin{pmatrix}
     1  &  x &  y \\
     x & 1 & z \\
     y & z & 1
   \end{pmatrix}.
     \label{eq:Amat}
\end{align}
$A$ is positive-semidefinite if and only if $|A|=1-x^2 - y^2 - z^2 + 2xyz \geq 0$.
\end{proposition}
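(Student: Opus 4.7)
The plan is to invoke the necessary and sufficient condition stated in the paragraph preceding the proposition: a real symmetric matrix is positive-semidefinite if and only if every principal minor is non-negative. I will then simply enumerate the principal minors of $A$ and observe that, under the standing hypothesis $x,y,z\in[-1,1]$, all of them except the determinant are automatically non-negative, so that the single condition $|A|\geq 0$ captures positive-semidefiniteness.

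Concretely, the first step is to list the seven principal minors of the $3\times 3$ matrix $A$. The three $1\times 1$ principal minors are the diagonal entries, each equal to $1$, hence non-negative. The three $2\times 2$ principal minors are obtained by deleting one row and the corresponding column and evaluate to $1-x^{2}$, $1-y^{2}$, and $1-z^{2}$; each is non-negative precisely because $x,y,z\in[-1,1]$. The remaining principal minor is $|A|$ itself, which a direct expansion (e.g.\ along the first row) gives as $|A|=1-x^{2}-y^{2}-z^{2}+2xyz$.

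For the ``only if'' direction, if $A$ is positive-semidefinite then every principal minor is non-negative, so in particular $|A|\geq 0$. For the ``if'' direction, the hypotheses $x,y,z\in[-1,1]$ and $|A|\geq 0$ together ensure that all seven principal minors are non-negative, so $A$ is positive-semidefinite. There is no genuine obstacle in this argument: the only computation is the expansion of the $3\times 3$ determinant, and the whole content of the proposition is that, once $x,y,z$ lie in $[-1,1]$, the $1\times 1$ and $2\times 2$ principal minors contribute nothing new and the determinant inequality is the only effective constraint.
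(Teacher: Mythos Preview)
Your proposal is correct and follows essentially the same approach as the paper: both invoke the principal-minor criterion for positive-semidefiniteness, compute $|A|=1-x^{2}-y^{2}-z^{2}+2xyz$ by direct expansion, and observe that the $1\times 1$ and $2\times 2$ principal minors are automatically non-negative under the assumption $x,y,z\in[-1,1]$, leaving $|A|\geq 0$ as the only effective condition.
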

\begin{proof}
We first note that $A$ is symmetric with determinant given by
\begin{align}
  |A| = 1
   \begin{vmatrix}
     1  &  z \\
     z & 1
   \end{vmatrix}
   - x
   \begin{vmatrix}
     x & z \\
     y & 1
   \end{vmatrix}
    + y
   \begin{vmatrix}
     x & 1 \\
     y & z
   \end{vmatrix}
   = 1-x^2-y^2-z^2 + 2 xyz.
\end{align}
The ``only if''-part of the proposition follows because $|A|$ is itself a principal minor.

The ``if''-part will follow if we can show that the principal minors obtained by deleting either one or two rows and columns, are non-negative by assumption.
The principal minors obtained by deleting one row and column are
\begin{align}
   \begin{vmatrix}
     1  &  z \\
     z & 1
   \end{vmatrix}
   = 1- z^2, \quad
   \begin{vmatrix}
     1 & y \\
     y & 1
   \end{vmatrix}
     = 1- y^2, \quad
   \begin{vmatrix}
     1 & x \\
     x & 1
   \end{vmatrix}
   = 1-x^2,
\end{align}
all of which are non-negative by assumption, while deleting two rows and columns leaves the remaining diagonal element, which in all cases is $1$ and hence non-negative.
\end{proof}

To gain some intuition into the condition of Proposition~\ref{prop:semidef} we now investigate the allowed parameter space for two of the parameters, when one is fixed. Figure~\ref{fig:possemidef} shows the domain of $(x,y)$ leading to a positive-semidefinite $A$ matrix for various values of $z$. We see that for given $z$, the domain of $(x,y)$ takes the form of an ellipsis which gets more and more narrow as $z$ tends to the boundary values $\pm 1$. Interpreting $x$, $y$ and $z$ as correlation coefficients, the probabilistic content is that when two variables are highly correlated ($z$ close to $\pm 1$) then their correlations to a third variable ($x$ and $y$) must be alike. In the limiting case $z=1$ ($z=-1$), the correlation coefficients must be identical $x=y$ ($x=-y$). Note, however, that $(x,y)$ is restricted for all values of $z$, also when $z=0$.

\begin{figure}[h]
\begin{center}
\includegraphics[height=7.5cm]{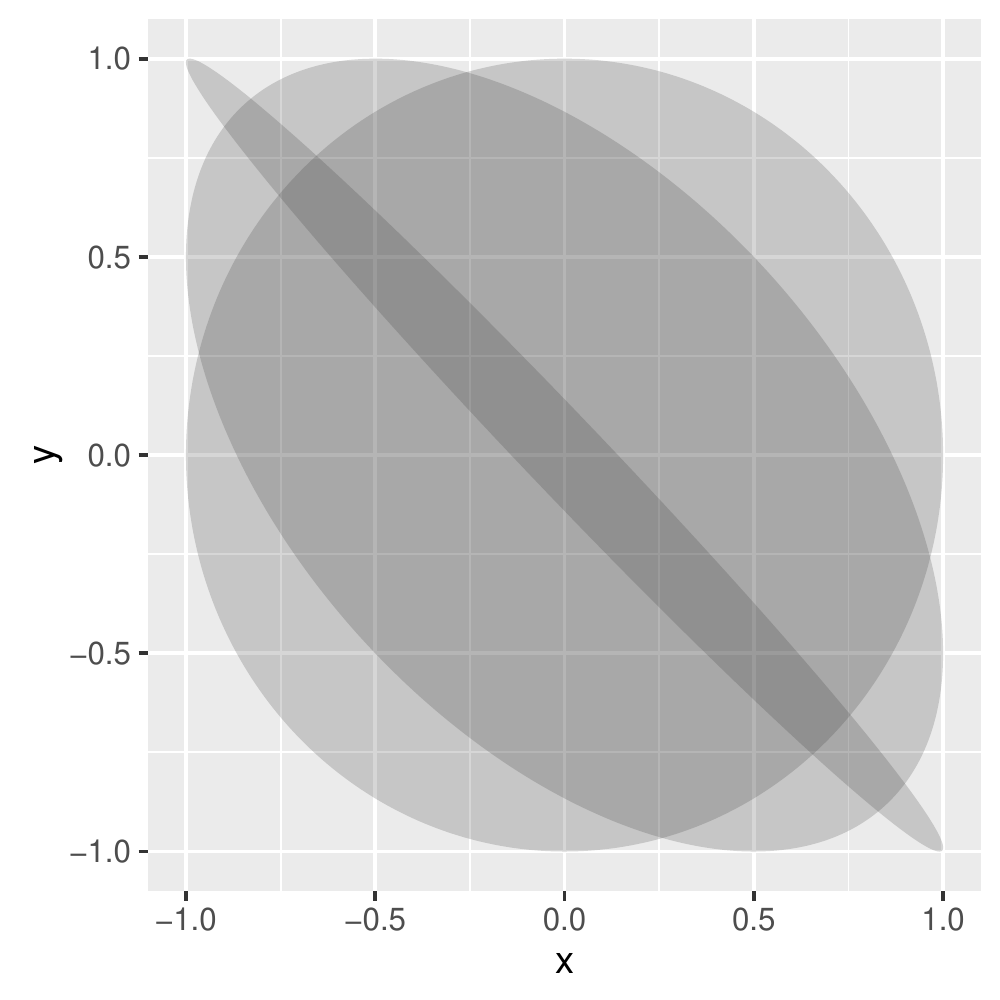}   
\hfill
\includegraphics[height=7.5cm]{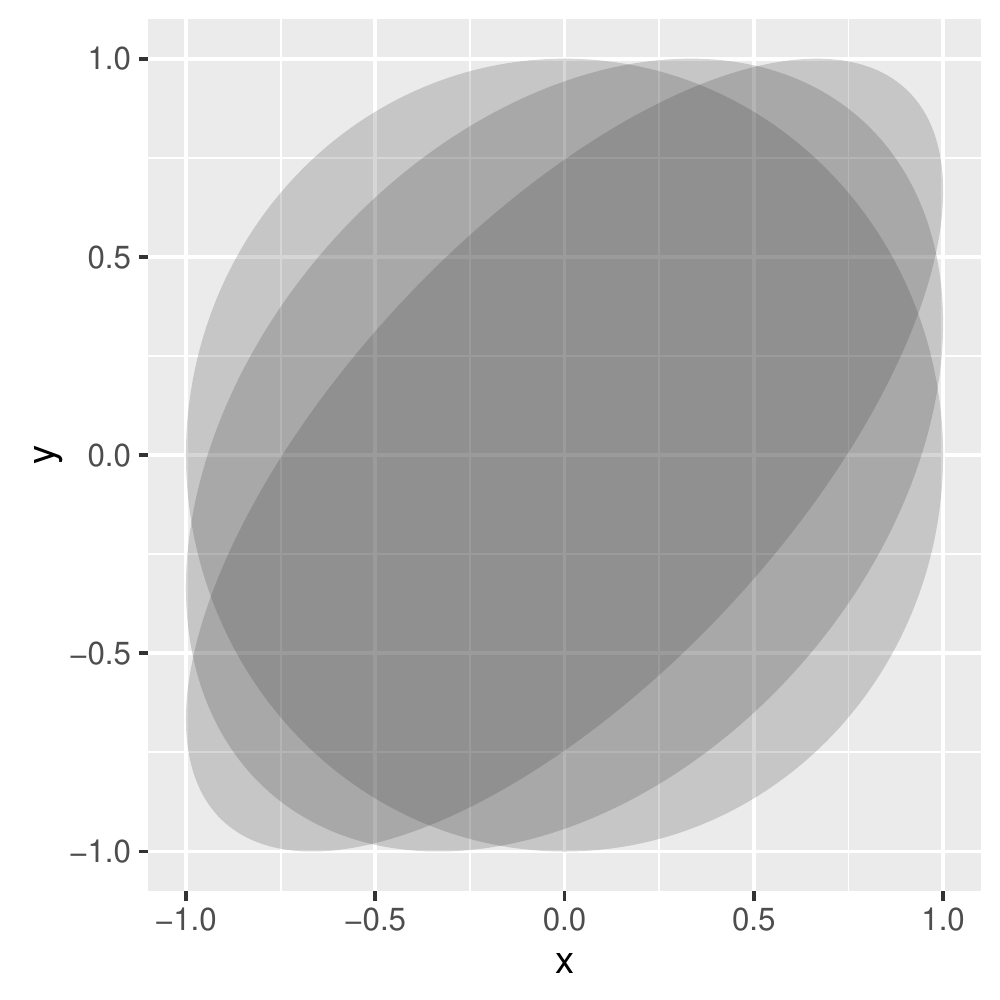}   
\end{center}
\vspace*{-5mm}
\caption{Plots of the domain of $(x,y)$ leading to a positive-semidefinite $A$ matrix for various values of $z$. In the left plot $z$ takes the values $-0.99$, $-1/2$ and $0$ (the circle), in the right plot $z$ takes the values $2/3$, $1/3$ and $0$ (the circle).}
\label{fig:possemidef}
\end{figure}

We can investigate the domains more formally by a change of variables. Consider the rotated coordinate system with orthonormal basis $\{(1/\sqrt{2},1/\sqrt{2}), (-1/\sqrt{2},1/\sqrt{2})\}$. In this basis, the coordinates of a point $(x,y)$ is given by $(u,v)=([x+y]/\sqrt{2},[x-y]/\sqrt{2})$, and conversely $(x,y)=([u+v]/\sqrt{2},[u-v]/\sqrt{2})$. Now, inserting this in the defining equation for the domain we get after a series of simple manipulations
\begin{align}
    1-x^2 - y^2 - z^2 + 2xyz \geq 0 \Leftrightarrow u^2(1-z) + v^2(1+z) \leq 1-z^2.
\end{align}
Further, for $-1 < z < 1$ we get
\begin{align}
    \frac{u^2}{1+z} + \frac{v^2}{1-z} \leq 1,
\end{align}
which shows that the domain is indeed an ellipsis with axis lengths $\sqrt{1+z}$ and $\sqrt{1-z}$, respectively. With this representation at hand we can also calculate the volume of the ``positive-semidefinite domain'' of $(x,y,z)$. For given $z$ the area of the ellipsis is $A(z) = \pi \sqrt{(1+z)(1-z)} = \pi\sqrt{1-z^2}$, and the volume is therefore
\begin{align}
    V = \int_{-1}^1 A(z)dz = \pi \int_{-1}^1 \sqrt{1-z^2}dz = \pi\left[\frac{z\sqrt{1-z^2}}{2} + \frac{1}{2}\sin^{-1}(z)\right]_{-1}^1 = \frac{\pi^2}{2}.
\end{align}
Compared to the volume of the ``bounding cube'', $[-1,1]^3$, we have that the positive-semidefinite domain occupies $\pi^2/16 \approx 61.7\%$. In probabilistic terms this implies that if the three elements (correlation coefficients) are drawn independently and uniformly at random on $[-1,1]$, then approximately 3 out of 5 times will the corresponding $A$ matrix be a valid correlation matrix.

The analysis is still primarily a method for verification of positive-semidefiniteness, but it can be recast into the following more constructive result.

\begin{proposition} \label{prop:semidefquad}
Consider real numbers $x$ and $y$ in $[-1,1]$. The matrix $A$ defined by (\ref{eq:Amat}) is a valid correlation matrix if and only if
\begin{align}
    z \in \left[ xy - \sqrt{(1-x^2)(1-y^2)}, xy + \sqrt{(1-x^2)(1-y^2)} \right].  \label{eq:psquad}
\end{align}
\end{proposition}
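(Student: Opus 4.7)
The plan is to reduce to Proposition~\ref{prop:semidef} and then simply solve a quadratic inequality in $z$. By that proposition, $A$ is positive-semidefinite (and hence a valid correlation matrix, given that the diagonal is $1$ and $x,y,z\in[-1,1]$) if and only if
\begin{align*}
  1 - x^2 - y^2 - z^2 + 2xyz \geq 0.
\end{align*}
Viewing the left-hand side as a downward-opening quadratic in $z$ with fixed $x,y$, I would complete the square:
\begin{align*}
  1 - x^2 - y^2 - z^2 + 2xyz = -(z - xy)^2 + x^2y^2 + 1 - x^2 - y^2 = (1-x^2)(1-y^2) - (z-xy)^2.
\end{align*}
The inequality is therefore equivalent to $(z-xy)^2 \leq (1-x^2)(1-y^2)$.

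Because $x,y\in[-1,1]$, the right-hand side is non-negative, so taking square roots gives $|z - xy| \leq \sqrt{(1-x^2)(1-y^2)}$, which is precisely the interval in (\ref{eq:psquad}). This establishes both directions at once, since every step is an equivalence.

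The only thing that feels like an obstacle is a consistency check: the statement implicitly assumes $z \in [-1,1]$, so I would briefly verify that the interval in (\ref{eq:psquad}) is automatically contained in $[-1,1]$ whenever $x,y\in[-1,1]$. The cleanest way is the trigonometric substitution $x=\cos\alpha$, $y=\cos\beta$, which gives $xy \pm \sqrt{(1-x^2)(1-y^2)} = \cos\alpha\cos\beta \pm \sin\alpha\sin\beta \cdot \mathrm{sgn}(\sin\alpha\sin\beta) = \cos(\alpha\mp\beta)$ (up to signs of the sines, which can be absorbed since both square roots are non-negative), and both values lie in $[-1,1]$. Thus the bracket in (\ref{eq:psquad}) is a genuine sub-interval of $[-1,1]$, and the proposition follows.
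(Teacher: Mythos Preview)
Your proof is correct and follows essentially the same route as the paper: invoke Proposition~\ref{prop:semidef} and solve the resulting quadratic in $z$ (you complete the square where the paper applies the quadratic formula, a cosmetic difference). Your trigonometric verification that the interval lies in $[-1,1]$ is a nice self-contained alternative to the paper's appeal to the preceding ellipse analysis.
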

\begin{proof}
Consider the condition of Proposition~\ref{prop:semidef} as a condition in $z$, $f(z) = A z^2  +  B z + C \geq 0$, where $A=-1$, $B=2xy$, and $C=1-x^2-y^2$. The solution is given by the well-known formula $-B \pm \sqrt{B^2 - 4AC}/2A$, which is the same as (\ref{eq:psquad}). In principle, we should also check that (\ref{eq:psquad}) results in $z$ in $[-1,1]$, but we already know this from the preceding analysis.
\end{proof}

Due to symmetry we conclude from Proposition~\ref{prop:semidefquad} that any two correlation coefficients can be chosen freely in $[-1,1]$, while the third is then restricted to a subinterval of $[-1,1]$ of form (\ref{eq:psquad}). This concludes our analysis. Note, the analysis rests on the fact that the ``correlation'' matrix $A$ is 3-by-3, and it cannot easily be extended to higher dimensions. In the 4-by-4 case, for example, the analogue of Proposition~\ref{prop:semidef} will have five conditions, which must hold simultaneously. This complicates the analysis considerably.

\section{Nominal bonds and inflation bonds}  \label{sec:bonds}
The primary purpose of this section is to extend the model with break-even inflation (BEI) curves. Break-even inflation is the ``price'' of a real payment, i.e.\ an amount proportional to $I_T$ paid at the future time $T$. We will derive the arbitrage free BEI curve for a three-parameter specification of the market prices of inflation risk.

First, however, we revisit the formula for the term structure of nominal interest rates and derive this again using the results of the preceding sections. This serves as introduction to the risk-neutral valuation technique and it highlights the similarity in form between the term structure of nominal rates and the BEI curve. We also derive the dynamics of nominal and real bond prices and corresponding constant-maturity indices. There results clarify the role of the various market prices of risks.

\subsection{Bond prices revisited}
In Section~\ref{sec:TS} we briefly mentioned that the bond prices can de derived by use of the risk-neutral valuation formula of arbitrage theory. More precisely, the price at time $t$ of a zero-coupon bond maturing at time $T\geq t$ is given by
\begin{align}  \label{eq:Qprice}
   p_t(T) = \E_t^Q\left[\exp\left\{-\int_t^T r_s ds\right\}\right],
\end{align}
where $\E^Q_t$ denotes the conditional expectation at time $t$ taken with respect to the so-called risk neutral measure ($Q$-measure) under which the dynamics of $r$ is given by
\begin{align}  \label{eq:Qdynr}
   dr_t = a(b - r_t)dt + \sigma_r d\bar{W}^r_t.
\end{align}
The risk neutral measure in turn arises from the $P$-dynamics and a specification of the market price of (interest rate) risk, $\lambda_t^r = \lambda^r(t,r_t)$. The market price of risk measures the expected excess return per unit of interest rate risk. According to arbitrage theory $p_t(T)$ is given by (\ref{eq:Qprice}), where $Q$ is the measure under which the process defined by $d\bar{W}^r_t = dW^r_t + \lambda_t^r dt$ is a Brownian motion.  In our case we have $\lambda^r_t = \{(a-\kappa)r_t + \kappa \bar{r} - a b\}/\sigma_r$, which combined with (\ref{eq:shortrate}) yields (\ref{eq:Qdynr}). By Girsanov's theorem the $Q$-expectation can also be calculated as a $P$-expectation by use of the pricing kernel $M$,
\begin{align}
  p_t(T)= \E_t\left[\frac{M_T}{M_t} \right], \quad M_t = \exp\left\{-\int_0^t r_s ds - \frac{1}{2}\int_0^t(\lambda_s^r)^2 ds  - \int_0^t \lambda_s^r dW^r_s \right\}.
\end{align}

For illustrative purposes let us use (\ref{eq:Qprice}) to retrieve the Vasicek term structure given in Section~\ref{sec:TS}. Using that the $P$- and $Q$-dynamics of $r$ are of the same form, it follows from Proposition~\ref{prop:conddist} that
\begin{align}
   \E_t^Q\left[\int_t^T r_s ds \right]  = \Delta b + \Psi(a,\Delta)(r_t-b), \quad \Var_t^Q\left[\int_t^T r_s ds \right]  = \sigma_r^2\Upsilon(a,\Delta),
\end{align}
where $\Delta = T-t$. By the well-known formula for the mean of a log-normal variable it then follows
\begin{align}  \label{eq:Vasprice}
  p_t(T) & = \exp\left\{-\Delta b - \Psi(a,\Delta)(r_t-b) + \frac{\sigma_r^2}{2}\Upsilon(a,\Delta)\right\},
\end{align}
which is the same as the Vasicek price (\ref{eq:VasZCBPrice}), albeit in slightly different notation.

By It\^{o}'s lemma we can calculate the dynamics of $p_t(T)$ for fixed $T$. First note that $\Psi^2(a,t)= \frac{2}{a}[\Psi(a,t) - \Psi(2a,t)]$. Considering $p_t(T)$ as a function of $t$ and $r_t$ we get
\begin{align*}
  \frac{dp_t(T)}{p_t(T)} = & \frac{\partial p}{\partial t}\frac{dt}{p_t(T)} + \frac{\partial p}{\partial r}\frac{dr}{p_t(T)} + \frac{1}{2}\frac{\partial^2 p}{\partial r^2}\frac{d\langle r,r\rangle_t}{p_t(T)} \\[2mm]
       = & \left(b + e^{-a\Delta}(r_t-b) - \frac{\sigma_r^2}{2}\frac{2}{a}[\Psi(a,\Delta)-\Psi(2a,\Delta)]\right)dt \\[2mm]
         & - \Psi(a,\Delta)\left(\kappa(\bar{r}-r_t)dt + \sigma_rdW^r_t \right) + \frac{1}{2}\Psi^2(a,\Delta)\sigma_r^2dt \\
       = & (r_t - \lambda^r_t\Psi(a,T-t)\sigma_r) dt - \Psi(a,T-t)\sigma_r dW^r_t ,
\end{align*}
where we in the last line substitute $\Delta$ with $T-t$ to highlight that the second argument to $\Psi$ changes with $t$. We see that the market price of risk, $\lambda^r$, does indeed have the interpretation of measuring the expected excess return per unit of risk as claimed. Also note, that in the chosen parametrization a {\it positive} excess return corresponds to a {\it negative} $\lambda^r$, and vice versa.

Let $B_t(\tau)$ denote the price process for a {\it constant-maturity} bond index with maturity $\tau$, i.e.\ the accumulated returns from a strategy which continuously rebalances to a bond with time to maturity $\tau$. It can be shown that the dynamics of $B_t(\tau)$ is given by
\begin{align}
  \frac{dB_t(\tau)}{B_t(\tau)} = (r_t - \lambda^r_t\Psi(a,\tau)\sigma_r) dt - \Psi(a,\tau)\sigma_r dW^r_t,  \label{eq:CMNI}  
\end{align}
where, in contrast to above, the second argument to $\Psi$ is kept fixed at $\tau$. This can be shown by first deriving $d\log B_t(\tau)$ by use of (\ref{eq:Vasprice}) and a limit argument, and then using It\^{o}'s lemma, see e.g.\ Appendix A.1 of \cite{marmil14} for details of the argument. Not surprisingly, the constant-maturity bond index offers the same relation between excess return and risk, $\lambda_t^r$, as the individual bonds.

\subsection{Inflation bond prices}
We will now derive the price of an inflation bond, i.e.\ a bond which pays the value of the price index $I_T$ at the future time $T$. Since $I$ is influenced by two sources of randomness we need to specify two risk premia, $\lambda^I$ and $\lambda^\pi$. To preserve the structure of the model and retain analytic tractability we consider only risk premia of the form $\lambda_t^I=h/\sigma_I$ and $\lambda_t^\pi=\{(k-\beta)\pi_t + \beta\bar{\pi} - kl\}/\sigma_\pi$ for constants $h$, $k$ and $l$.

By arbitrage theory the price at time $t$ of an inflation bond paying $I_T$ at time $T\geq t$ is given by
\begin{align}
   q_t(T) = \E_t^Q\left[\exp\left\{-\int_t^T r_s ds\right\}I_T \right],
\end{align}
where $Q$ is the measure under which $(\bar{W}^r,\bar{W}^I,\bar{W}^\pi)$ defined by $d\bar{W}^r_t = dW^r_t + \lambda_t^r dt$, $d\bar{W}^I_t = dW^I_t + \lambda^I_t dt$ and $d\bar{W}^\pi_t = dW^\pi_t + \lambda^\pi_t dt$ is a three-dimensional Brownian motion with the same correlation structure as $(W^r,W^I,W^\pi)$ under $P$. It follows from the defining equations (\ref{eq:inflindex}) and (\ref{eq:expectedinfl}) that
\begin{align}
    \frac{dI_t}{I_t} & = (\pi_t - h)dt + \sigma_I d\bar{W}^I_t, \\[2mm]
    d\pi_t   & = k(l-\pi_t)dt + \sigma_\pi d\bar{W}^\pi_t.
\end{align}
Further, with $\Delta=T-t$ calculations similar to those in Section~\ref{sec:intrep} yield
\begin{align}
    I_T = I_t \exp\left\{\int_{t}^T \pi_s ds - h\Delta - \frac{\sigma_I^2}{2}\Delta + \sigma_I(\bar{W^I_T}-\bar{W^I_t})\right\},
\end{align}
and we can then write $q_t(T)   = I_t \E_t^Q\left[\exp(X_{t,T}) \right]$, where
\begin{align}
  X_{t,T}  =  \int_{t}^T \pi_s ds -\int_t^T r_s ds - h\Delta - \frac{\sigma_I^2}{2}\Delta + \sigma_I(\bar{W^I_T}-\bar{W^I_t}).
\end{align}
By use of Proposition~\ref{prop:conddist} we have
\begin{align}
  \E^Q_t\left[X_{t,T}\right] & =  \left(l -b- h - \frac{\sigma_I^2}{2}\right)\Delta + \Psi(k,\Delta)(\pi_t - l)  - \Psi(a,\Delta)(r_t - b), \\[2mm]
  \Var^Q_t\left[X_{t,T}\right] & =  \sigma_\pi^2\Upsilon(k,\Delta) + \sigma_r^2\Upsilon(a,\Delta) - 2\sigma_r\sigma_\pi\rho_{r\pi}\Lambda(a,k,\Delta) + \sigma_I^2 \Delta,
\end{align}
and since $X_{t,T}$ is normally distributed we can then calculate $q_t(T)$ as
\begin{align}
  q_t(T) = I_t\exp\left\{\E^Q_t\left[X_{t,T}\right] + \frac{1}{2}\Var^Q_t\left[X_{t,T}\right] \right\}.
\end{align}
By use of It\^{o}'s lemma we find after lengthy and tedious calculations, that for fixed $T$
\begin{align*}
    \frac{dq_t(T)}{q_t(T)} = & \left[r_t - \lambda^r_t\Psi(a,T-t)\sigma_r + \lambda_t^\pi \Psi(k,T-t)\sigma_\pi + \lambda_t^I\sigma_I \right]dt \\[2mm]
                           & - \Psi(a,T-t)\sigma_rdW^r_t + \Psi(k,T-t)\sigma_\pi dW^\pi_t + \sigma_I dW^I_t,
\end{align*}
showing that the risk premia are ``rewards'' for carrying the different sources of uncertainty. Note that although $W^r$ and $W^\pi$ can be correlated this does not affect the expected return.

Let $D_t(\tau)$ denote the price process for a {\it constant-maturity} inflation bond index with maturity $\tau$, i.e.\ the accumulated return from a strategy which continuously rebalances to an inflation bond with time to maturity $\tau$. It can be shown that the dynamics of $D_t(\tau)$ is given by
\begin{align}
  \begin{split}
    \frac{dD_t(\tau)}{D_t(\tau)} = & \left[r_t - \lambda^r_t\Psi(a,\tau)\sigma_r + \lambda_t^\pi \Psi(k,\tau)\sigma_\pi + \lambda_t^I\sigma_I \right]dt \\[2mm]
                           & - \Psi(a,\tau)\sigma_rdW^r_t + \Psi(k,\tau)\sigma_\pi dW^\pi_t + \sigma_I dW^I_t,  \label{eq:CMII}  
  \end{split}
\end{align}
which is the same as the dynamics of $q$, except for the constant horizon. The derivation requires a limit argument similar to the argument leading to the dynamics of the constant-maturity bond index, $B_t(\tau)$.

\subsection{Break-even inflation}
A zero-coupon inflation-indexed swap (ZCIIS) is an agreement between two parties to exchange a nominal amount for an inflation indexed amount at a future date. Specifically, for a unit notional contract settled at time $t$ the parties agree to exchange at time $T\geq t$, the notional amount $\exp\{(T-t)i_t(T)\}$ for the stochastic amount $I_T/I_t$. The yield $i_t(T)$ is termed {\it break even inflation} (BEI) and it is defined by the requirement that the contract opens at zero market value, i.e.
\begin{align}
   \frac{q_t(T)}{I_t} = p_t(T)\exp\left\{(T-t)i_t(T)\right\}.
\end{align}
It follows from the preceding two sections that
\begin{align}  \label{eq:BEI}
   i_t(T) = l - h + \frac{\Psi(k,\Delta)}{\Delta}(\pi_t-l)+ \frac{\sigma_\pi^2}{2}\frac{\Upsilon(k,\Delta)}{\Delta} - \sigma_r\sigma_\pi\rho_{r\pi}\frac{\Lambda(a,k,\Delta)}{\Delta},
\end{align}
where $\Delta = T-t$. Note, letting $\Delta$ tend to $0$ yields $i_t(t) = \pi_t-h$.

Figure~\ref{fig:BEIcurves} illustrates the effect of parameter values on the BEI curve; in the left plot the initial expected rate of inflation is $\pi_0=0\%$, and in the right plot $\pi_0=2\%$. The illustrated parameter values are given in Table~\ref{tab:parvalBEI}.  Thinking of the black line (case 1) as the base case, all subsequent cases lead to more ``expensive'' models, but for different reasons. Positive correlation to nominal rates decreases BEI, so assuming independence (case 2) leads to higher BEI. Cases 3 and 4 increase the stationary level and mean reversion strength, respectively. In the left plot where $\pi_0$ is below its stationary level, both changes increase BEI. However, in the right plot where $\pi_0=2\%$ only case 3 leads to higher BEI, while case 4 gives essentially the same curve as the base case (if $\pi_0$ were above its stationary level case 4 would lead to a BEI curve below the base case). Finally, in case 5 a negative risk premium is introduced which pushes the entire curve upwards.

\begin{table}
\centering
\rowcolors{2}{white}{gray!25}
\bgroup
\def\arraystretch{1.3}
\begin{tabular}{cccccccc} \hline
  \rowcolor{gray!50}
  Case & \multicolumn{2}{c}{Rate}  &  \multicolumn{4}{c}{Inflation} &  \multicolumn{1}{c}{Correlation} \\
  \rowcolor{gray!50}
  \#  &    $a$    & $\sigma_r$ &    $h$   &    $k$   &    $l$   &   $\sigma_\pi$  &    $\rho_{r\pi}$    \\ \hline
  1   &    0.095  &     0.01   &    0     &    0.05  &    0.02  &     0.005       &       0.80          \\
  2   &    0.095  &     0.01   &    0     &    0.05  &    0.02  &     0.005       &       {\bf 0}             \\
  3   &    0.095  &     0.01   &    0     &    0.05  &    {\bf 0.03}  &     0.005       &       0.80          \\
  4   &    0.095  &     0.01   &    0     &    {\bf 0.10}  &    0.02  &     0.005       &       0.80          \\
  5   &    0.095  &     0.01   & {\bf -0.0025}  &    0.05  &    0.02  &     0.005       &       0.80          \\ \hline
\end{tabular}
\egroup
\caption{Five sets of parameter values for illustration of BEI curves. Cases 2--5 changes one quantity ({\bf bold}) from its value in case 1, all other parameters are the same.}
\label{tab:parvalBEI}
\end{table}

\begin{figure}[h]
\begin{center}
\includegraphics[height=7.5cm]{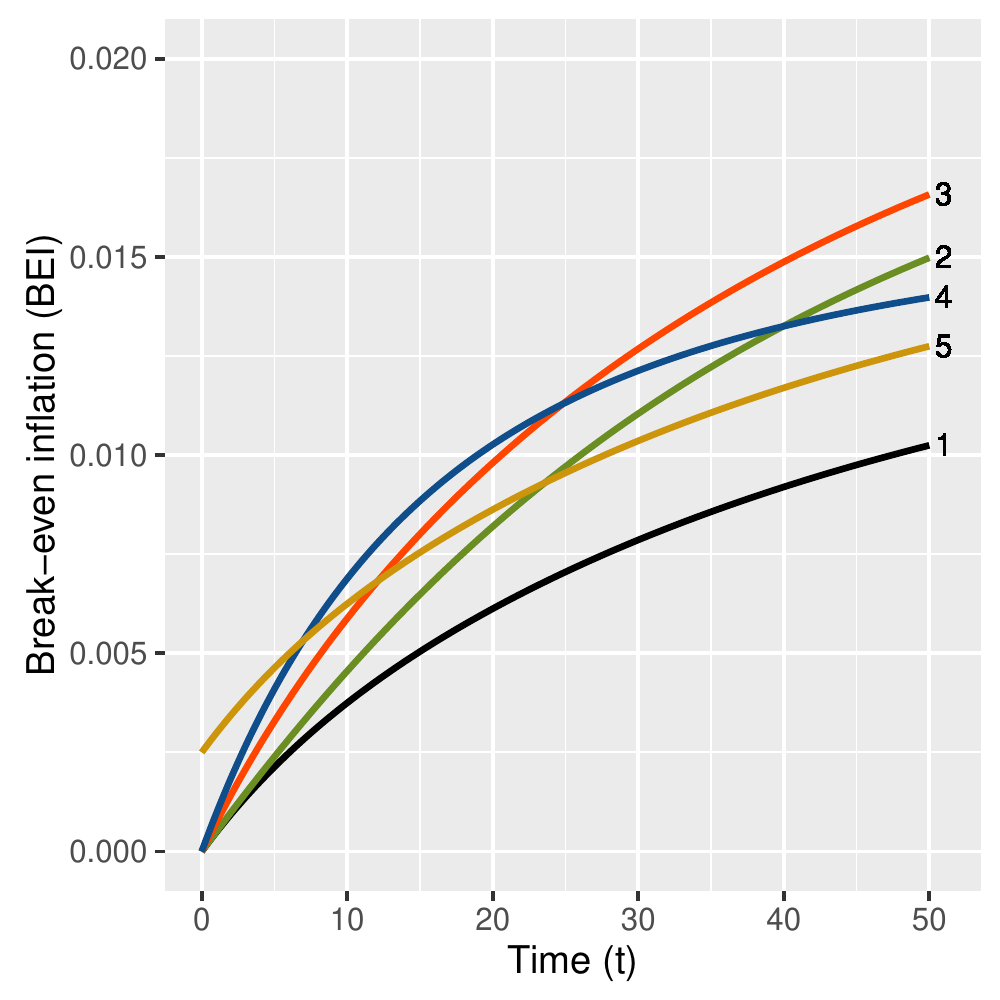}   
\hfill
\includegraphics[height=7.5cm]{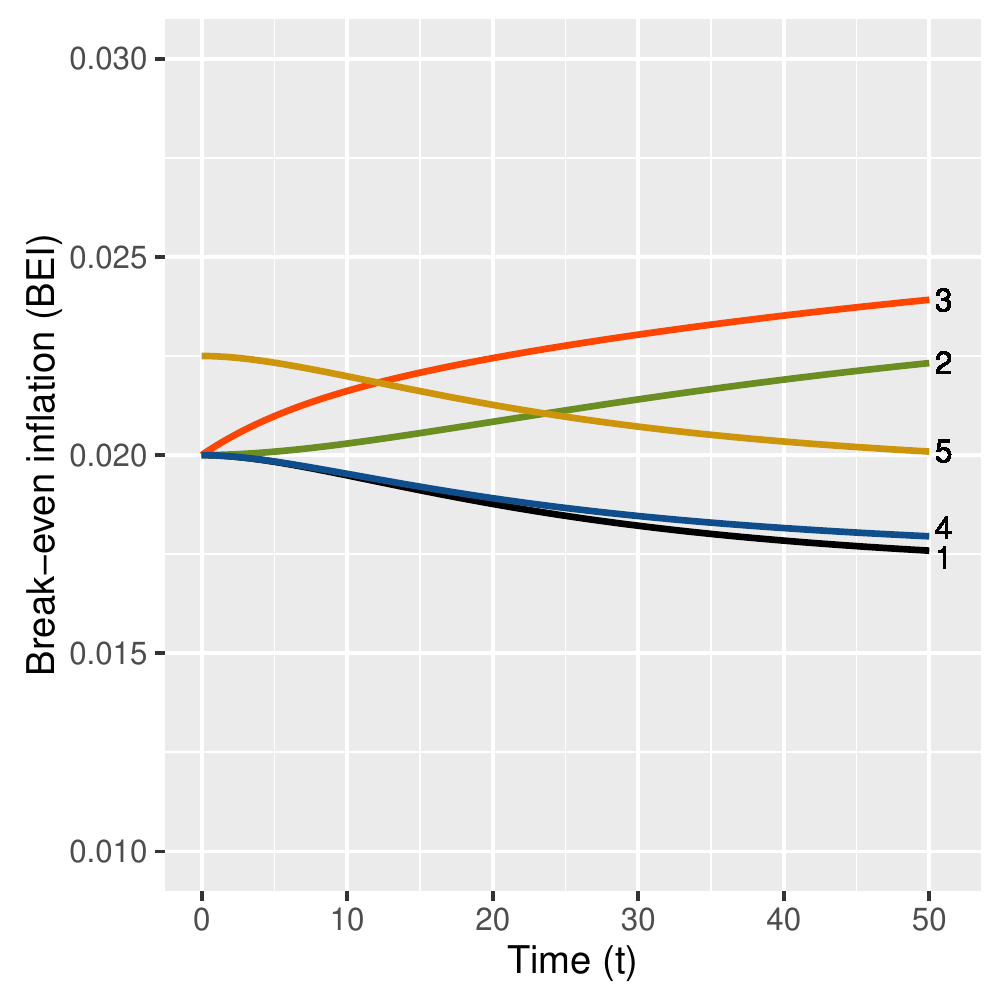}   
\end{center}
\vspace*{-5mm}
\caption{Illustration of break-even inflation (BEI) curves for the five parameter sets given in Table~\ref{tab:parvalBEI}. In the left plot $\pi_0=0\%$ and in the right plot $\pi_0=2\%$. The BEI is computed by (\ref{eq:BEI}).}
\label{fig:BEIcurves}
\end{figure}

\section{Stock and inflation indices} \label{sec:stockinfl}
The results obtained thus far allow us to derive the distribution of the stock index, $S_t$, the inflation index, $I_t$, and the stock index in real terms, $S_t/I_t$. The formulas are too complex to reveal much insight in their own right, but they can be used to investigate model properties numerically.

\begin{proposition}
For $t\geq 0$, $S_t$, $I_t$ and $S_t/I_t$ are log-normally distributed with mean
\begin{align}
   \E[\log S_t]  = & \log S_0 + \left(\bar{r}+\bar{x}- \frac{\sigma_S^2}{2} \right)t + \Psi(\kappa,t)(r_0-\bar{r}) + \Psi(\alpha,t)(x_0-\bar{x}),  \\[2mm]
   \E[\log I_t]  = & \log I_0 + \left(\bar{\pi}- \frac{\sigma_I^2}{2} \right)t + \Psi(\beta,t)(\pi_0 - \bar{\pi}),  \\[2mm]
   \begin{split}
   \E[\log S_t/I_t] = & \log \frac{S_0}{I_0} + \left(\bar{r}+\bar{x}-\bar{\pi} + \frac{\sigma_I^2}{2} - \frac{\sigma_S^2}{2} \right)t \\[2mm]
                                         & + \Psi(\kappa,t)(r_0-\bar{r}) + \Psi(\alpha,t)(x_0-\bar{x}) - \Psi(\beta,t)(\pi_0 - \bar{\pi}),
   \end{split}
\end{align}
and variances
\begin{align}
\begin{split}  \label{eq:VarlogS}
   \Var[\log S_t]  = & \sigma_r^2 \Upsilon(\kappa,t) + \sigma_x^2 \Upsilon(\alpha,t) + \sigma_S^2 t  \\[2mm]
                     & + 2\left[\sigma_r\sigma_S\rho_{rS}\Theta(\kappa,t)-\sigma_r\sigma_x\rho_{rS}\Lambda(\kappa,\alpha,t) - \sigma_x\sigma_S\Theta(\alpha,t)  \right],
\end{split}   \\[2mm]
   \Var[\log I_t]  = & \sigma_\pi^2\Upsilon(\beta,t) + \sigma_I^2 t,  \label{eq:VarlogI} \\[2mm]
\begin{split} \label{eq:VarlogSI}
   \Var[\log S_t/I_t] = & \sigma_r^2 \Upsilon(\kappa,t) + \sigma_x^2 \Upsilon(\alpha,t) + \sigma_\pi^2\Upsilon(\beta,t) + \sigma_S^2 t + \sigma_I^2 t \\[2mm]
     & + 2\left[\sigma_r\sigma_S\rho_{rS}\Theta(\kappa,t)-\sigma_r\sigma_x\rho_{rS}\Lambda(\kappa,\alpha,t) - \sigma_x\sigma_S\Theta(\alpha,t)  \right]   \\[2mm]
     & - 2\left[\sigma_S\sigma_\pi\rho_{S\pi}\Theta(\beta,t) - \sigma_x\sigma_\pi\rho_{S\pi}\Lambda(\alpha,\beta,t) + \sigma_r\sigma_\pi\rho_{r\pi}\Lambda(\kappa,\beta,t) \right],
\end{split}
\end{align}
where the functions $\Psi$, $\Theta$, $\Upsilon$ and $\Lambda$ are defined in Appendix~\ref{app:computation}.
\end{proposition}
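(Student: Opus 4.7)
The plan is to leverage the integral representations (\ref{eq:Ssol}) and (\ref{eq:Isol}), which express $\log S_t$ and $\log I_t$ as affine functions of the components of $Y_t$ together with $W^I_t$. By Proposition~\ref{prop:conddist}, $Y_t$ is multivariate normal, and by the assumed independence $W^I \independent (W^r,W^S,W^\pi)$ the scalar $W^I_t \sim N(0,t)$ is independent of $Y_t$. Any linear combination of these components is therefore normal, so $\log S_t$, $\log I_t$, and $\log(S_t/I_t) = \log S_t - \log I_t$ are normal, establishing the log-normality claim.

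For the mean formulas I would apply linearity of expectation directly to (\ref{eq:Ssol}) and (\ref{eq:Isol}). The expectations of the integrated state variables are read off the bottom row of Table~\ref{tab:moments}, $\E[W^S_t]=\E[W^I_t]=0$, and the deterministic drift terms $-\sigma_S^2 t/2$ and $-\sigma_I^2 t/2$ appear unchanged. Subtracting $\log I_t$ from $\log S_t$ combines the two variance adjustments into $\sigma_I^2/2 - \sigma_S^2/2$, which accounts for the sign flip of $\sigma_I^2$ in the $S_t/I_t$ mean.

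For $\Var[\log S_t]$ I would use bilinearity of covariance on the three stochastic pieces $\int_0^t r_s ds$, $\int_0^t x_s ds$, and $\sigma_S W^S_t$, giving three variances and three doubled cross-covariances, each of which is an entry of Table~\ref{tab:moments} (with an extra factor $\sigma_S$ for covariances involving $\sigma_S W^S_t$). Formula (\ref{eq:VarlogI}) is immediate since $W^I \independent W^\pi$ kills the only potential cross-term, leaving $\sigma_\pi^2 \Upsilon(\beta,t) + \sigma_I^2 t$.

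The $\log(S_t/I_t)$ variance is the main bookkeeping step: I would decompose it as a sum of six stochastic pieces — the three from $\log S_t$ together with $-\int_0^t \pi_s ds$ and $-\sigma_I W^I_t$ — and expand the variance as a symmetric $6 \times 6$ covariance sum. Independence of $W^I$ eliminates every covariance involving $W^I_t$, so only the three new cross-blocks between the inflation pair $(\pi,\int\pi)$ and the rate/stock pair contribute, each with an overall minus sign from the subtraction of $\int_0^t\pi_s ds$. The main subtlety, and the place where it is easy to slip, is sign tracking: the table entries already carry negative signs inherited from the $-\sigma_x dW^S$ term in (\ref{eq:excessreturn}), and the subtraction of the inflation terms inverts those signs once more. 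Reading off the relevant entries — $\Theta(\beta,t)$, $\Lambda(\alpha,\beta,t)$, and $\Lambda(\kappa,\beta,t)$ — with their correct coefficients, and grouping the doubled cross-terms, yields exactly (\ref{eq:VarlogSI}).
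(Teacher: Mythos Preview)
Your proposal is correct and follows essentially the same route as the paper: express $\log S_t$, $\log I_t$, and $\log(S_t/I_t)$ via the integral representations (\ref{eq:Ssol})--(\ref{eq:Isol}), invoke Proposition~\ref{prop:conddist} together with the independence of $W^I$ to establish normality, and then read off the means and (co)variances from Table~\ref{tab:moments}. Two minor slips in your write-up: there are \emph{five} stochastic pieces in $\log(S_t/I_t)$, not six, and only $\int_0^t\pi_s\,ds$ enters (not a ``pair'' $(\pi,\int\pi)$) --- but your cross-term count and the resulting formula are unaffected.
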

\begin{proof}
It follows from Proposition~\ref{prop:conddist} and (\ref{eq:Ssol}) and (\ref{eq:Isol}) that $S_t$, $I_t$ and $S_t/I_t$ are log-normally distributed for $t\geq 0$. Further
\begin{align*}
  \E[\log S_t] & = \log S_0 + \E\left[ \int_0^t r_s ds + \int_0^t x_s ds - \frac{\sigma_S^2}{2}t \right], \\[2mm]
  \E[\log I_t] & = \log I_0 + \E\left[ \int_0^t \pi_s ds - \frac{\sigma_I^2}{2}t \right], \\[2mm]
  \E[\log S_t/I_t] & = \log\frac{S_0}{I_0} + \E\left[ \int_0^t r_s ds + \int_0^t x_s ds - \int_0^t \pi_s ds + \frac{\sigma_I^2}{2}t - \frac{\sigma_S^2}{2}t \right], \\[2mm]
  \Var[\log S_t] & = \Var\left[\int_0^t r_s ds + \int_0^t x_s ds + \sigma_S W^S_t \right], \\[2mm]
  \Var[\log I_t] & = \Var\left[\int_0^t \pi_s ds  + \sigma_I W^I_t \right]  \\[2mm]
  \Var[\log S_t/I_t] & = \Var\left[\int_0^t r_s ds + \int_0^t x_s ds - \int_0^t \pi_s ds  + \sigma_S W^S_t - \sigma_I W^I_t \right],
\end{align*}
from which the stated formulas follow by use of Table~\ref{tab:moments} on page~\pageref{tab:moments}, and the fact that $W^I_t$ is independent of the other quantities.
\end{proof}

Covariances between $\log S_t$, or $\log I_t$, and the other state variables can be calculated similarly, but we omit the details.

Since $\Theta$, $\Upsilon$ and $\Lambda$ are all $o(t)$ for $t$ tending to zero, it follows from (\ref{eq:VarlogS})--(\ref{eq:VarlogSI}) that
\begin{align}
  \lim_{t\to 0}\frac{\Var[\log S_t]}{t} = \sigma_S^2, \quad
  \lim_{t\to 0}\frac{\Var[\log I_t]}{t}  = \sigma_I^2, \quad
  \lim_{t\to 0}\frac{\Var[\log S_t/I_t]}{t} =  \sigma_S^2 + \sigma_I^2.  \label{eq:Varzero}
\end{align}
This of course could also be inferred directly from the defining equations (\ref{eq:stockindex}) and (\ref{eq:inflindex}), since locally only the diffusion term contributes to the variance. The limits in (\ref{eq:Varzero}) are valid for all parameter values.

In the typical situation, the Ornstein-Uhlenbeck processes for $r$, $x$ and $\pi$ are all stationary, i.e.\ the mean reversion parameters and volatilities are all positive. In this situation the leading order term of $\Theta$, $\Upsilon$ and $\Lambda$ are linear in time for large $t$ and we have the following asymptotic results
\begin{align}
  \lim_{t\to\infty}\frac{\Var[\log S_t]}{t} = & \frac{\sigma_r^2}{\kappa^2} + \frac{\sigma_x^2}{\alpha^2} + \sigma_S^2
                      + 2\left[ \frac{\sigma_r\sigma_S\rho_{rS}}{\kappa}-\frac{\sigma_r\sigma_x\rho_{rS}}{\kappa\alpha} - \frac{\sigma_x\sigma_S}{\alpha} \right], \label{eq:VarlogSlimit}\\[2mm]
  \lim_{t\to\infty}\frac{\Var[\log I_t]}{t}  = & \frac{\sigma_\pi^2}{\beta^2} + \sigma_I^2, \label{eq:VarlogIlimit} \\[2mm]
\begin{split}   \label{eq:VarlogSIlimit}
  \lim_{t\to\infty}\frac{\Var[\log S_t/I_t]}{t} = & \frac{\sigma_r^2}{\kappa^2} + \frac{\sigma_x^2}{\alpha^2} + \sigma_S^2
       + 2\left[ \frac{\sigma_r\sigma_S\rho_{rS}}{\kappa}-\frac{\sigma_r\sigma_x\rho_{rS}}{\kappa\alpha} - \frac{\sigma_x\sigma_S}{\alpha} \right] \\[2mm]
     & + \frac{\sigma_\pi^2}{\beta^2} + \sigma_I^2 - 2\left[\frac{\sigma_S\sigma_\pi\rho_{S\pi}}{\beta} - \frac{\sigma_x\sigma_\pi\rho_{S\pi}}{\alpha\beta} + \frac{\sigma_r\sigma_\pi\rho_{r\pi}}{\kappa\beta} \right].
\end{split}
\end{align}
Hence, in a sense the processes $S$, $I$ and $S/I$ all behave like geometric Brownian motions on sufficiently long horizons. Note, however, that depending on parameter values the convergence can be rather slow and a rich set of variance structures on both moderate and longer time horizons are in fact possible.

The limit results are also valid if one or more of $r$, $x$ and $\pi$ are deterministic, i.e.\ if one or more of the volatilities are zero (with the convention $0/0=0$ the formulas even hold when the volatility and the corresponding mean reversion parameter are both zero, e.g.\ $\alpha=\sigma_x=0$). The limit results do not hold, however, in the random walk or exploding cases where the ``mean reversion'' parameter is zero or negative and the volatility is positive.

\subsection{Variance structure}
In the widely used Black-Scholes model the stock index is modelled as a geometric Brownian motion (GBM)\footnote{The Black-Scholes model is a special case of our model obtained by setting $\kappa=\sigma_r=\alpha=\sigma_x=0$.},
\begin{align}
    \frac{dS_t}{S_t} = (r + \mu)dt + \sigma_S dW^S_t,
\end{align}
where both the (short) rate, $r$, and the risk premium, $\mu$, are constant. The stock index is log-normally distributed with
\begin{align}
    \E[\log S_t] & = \log S_0 + \left(r+\mu - \frac{\sigma_S^2}{2}\right)t, \quad \Var[\log S_t] = \sigma_S^2 t.
\end{align}
The model thus implies that the (short term) volatility accumulates linearly over time. This is problematic when simulating over long horizons because the distribution of the stock index becomes very wide (for realistic values of short term stock volatility $\sigma_S$). In other words, the implicitly assumed linear relationship between instantaneous and long-term variance is undesirable.

One of the key features of the model of this paper is that it offers a more flexible variance structure. This is achieved by the risk premium being negatively correlated with the stock index, which in turn implies mean reverting stock returns. We can therefore have both a realistic level of short term stock volatility and control of the width of the stock index distribution on longer horizons. As seen previously, the variance will ultimately be linearly increasing but the rate can differ from the short term volatility.

In the following we illustrate how the various parameters affect the volatility of $\log S$ and $\log S/I$ over time. We assume throughout that $\sigma_S=0.15$ corresponding to the volatility of a well-diversified
stock portfolio, and $\sigma_I=0.005$ corresponding to only small deviations between realized and expected inflation. For the other parameters we consider the cases shown in Table~\ref{tab:parval}. Note that since we are only interested in the variation we do not need to specify the mean reversion levels, nor the initial condition for the state variables.

\begin{table}
\centering
\rowcolors{2}{white}{gray!25}
\bgroup
\def\arraystretch{1.3}
\begin{tabular}{cccccccccccc} \hline
  \rowcolor{gray!50}
  Case & \multicolumn{2}{c}{Rate}  &  \multicolumn{2}{c}{Stock} &  \multicolumn{2}{c}{Inflation} &  \multicolumn{3}{c}{Correlations} &  \multicolumn{2}{c}{Asymp. vol.} \\
  \rowcolor{gray!50}
  \#  & $\kappa$  & $\sigma_r$ & $\alpha$ & $\sigma_x$ & $\beta$ & $\sigma_\pi$  & $\rho_{rS}$ & $\rho_{r\pi}$ & $\rho_{S\pi}$  & Nom.  & Real \\ \hline
  1   &    0      &     0      &   0     &      0      &   0      &    0          &    0      &      0       &          0       &  0.150 &  0.150    \\
  2   &    0.05   &     0.01   &   0     &      0      &   0.05   &    0.005      &    0      &      0.80    &         -0.25    &  0.250 &  0.219    \\
  3   &    0.05   &     0.01   &   0.06  &      0.007  &   0.05   &    0.005      &    0      &      0.80    &         -0.25    &  0.203 &  0.144    \\
  4   &    0.05   &     0.01   &   0.06  &      0.015  &   0.05   &    0.005      &    0      &      0.80    &         -0.25    &  0.224 &  0.152     \\
  5   &    0.10   &     0.01   &   0.06  &      0.015  &   0.05   &    0.005      &    0      &      0.80    &         -0.25    &  0.141 &  0.095    \\ \hline
\end{tabular}
\egroup
\caption{Five sets of parameter values used for illustration of different variance structures of $\log S$ and $\log S/I$. Mean reversion levels ($\bar{r}$, $\bar{x}$ and $\bar{\pi}$) and initial conditions of the state variables are not specified since they do not affect the variance. The last two columns show the asymptotic volatility (rate) for $\log S$ and $\log S/I$, i.e.\ the square root of the right-hand side of (\ref{eq:VarlogSlimit}) and (\ref{eq:VarlogSIlimit}), respectively.}
\label{tab:parval}
\end{table}

\begin{figure}[h]
\begin{center}
\includegraphics[height=7.5cm]{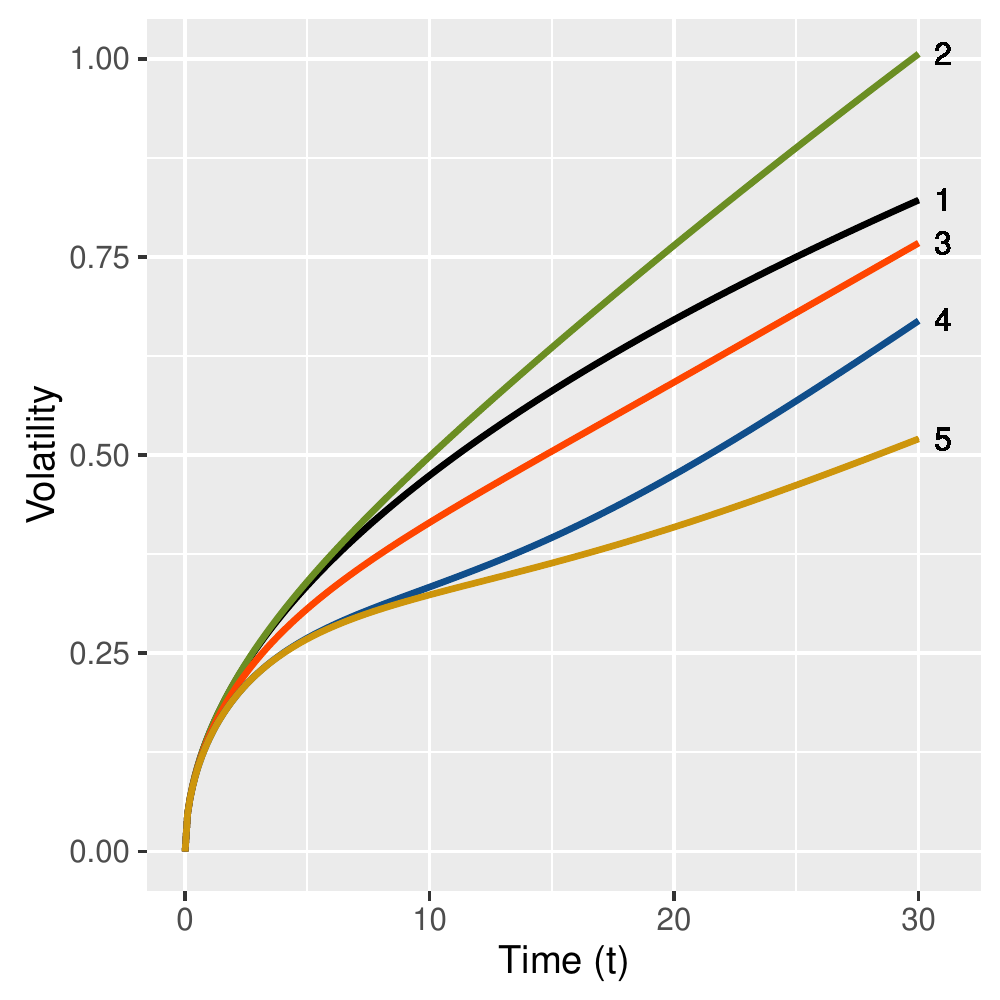}   
\hfill
\includegraphics[height=7.5cm]{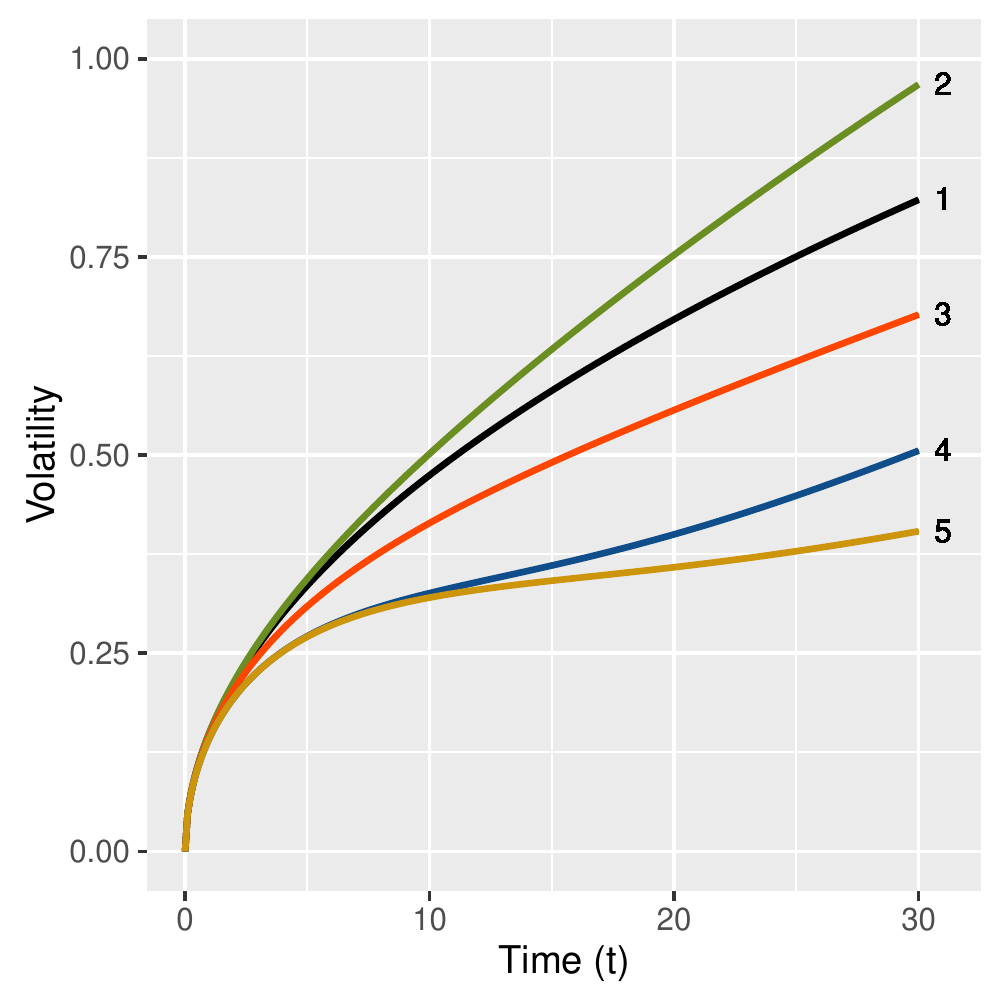}   
\end{center}
\vspace*{-5mm}
\caption{Illustration of the standard deviation (volatility) of $\log S$ (left plot) and $\log S/I$ (right plot) as a function of time. The standard deviation is computed as the square root of the right-hand side of (\ref{eq:VarlogS}) and (\ref{eq:VarlogSI}), respectively. The numbers refer to the five parameter sets given in Table~\ref{tab:parval}.}
\label{fig:volatility}
\end{figure}

Figure~\ref{fig:volatility} shows the standard deviation of $\log S$ (left plot) and $\log S/I$ (right plot) over time for the five parameter sets given in Table~\ref{tab:parval}. The black line (case 1) corresponds to a Black-Scholes model for both $S$ and $S/I$ (the volatility in the latter case is $(\sigma_S^2 + \sigma_I^2)^{1/2} = 0.1501 \approx \sigma_S$). In this case the standard deviation is proportional to $\sqrt{t}$. Adding stochastic interest rates and expected inflation (case 2) increases the volatility. However, due to the assumed positive correlation between $r$ and $\pi$ the value of the stock index in real terms ($S/I$) is less volatile than its nominal value ($S$).

Looking at Figure~\ref{fig:volatility}, we see that introducing mean reverting stock returns (case 3) decreases the volatility of $\log S$ in both nominal and real terms, increasing the feedback on expected returns (case 4) decreases the volatility further, and narrowing the interest rate distribution (case 5) decreases the volatility even further. These results conform with the rationale for including the various model components and they illustrate different ways of controlling the stock index volatility on longer horizons. Note, however, that asymptotically only case 5 has lower volatility than the Black-Scholes model, thus eventually the blue and orange lines cross the black line. This perhaps is somewhat surprising and it demonstrates the limitation of intuitive reasoning --- and the usefulness of formal analysis.

\section{Portfolio properties} \label{sec:portfolio}
The model offers the possibility of investing in stocks, nominal bonds, and inflation bonds, in addition to cash. In this section we discuss properties of portfolios constructed from these assets.

\subsection{Dynamics and Sharpe ratio}
To avoid a discussion of rebalancing rules and holding periods, we assume in the following that nominal and inflation bond exposure is taken by investing in constant-maturity indices. Specifically, we assume that we invest in the (nominal) index $B_t$ with maturity $\tau_B$, and in the (real) index $D_t$ with maturity $\tau_D$. The dynamics of $B_t(\tau_B)$ and $D_t(\tau_D)$ are given by (\ref{eq:CMNI}) and (\ref{eq:CMII}), respectively.

Let us consider a portfolio with continuous rebalancing to desired relative weights: $w^S$ in stocks, $w^B$ in nominal bonds (index $B$), $w^D$ in inflation bonds (index $D$), and the rest in cash. We place no restrictions on the weights. Initially, we have the ``classical'' case in mind, i.e.\ the weights are chosen to give a balanced asset mix (regardless of the assets' exposure to the underlying risk factors). Of course, this only influences the way we interpret the formulas; the mathematics is the same regardless of how we determine the weights. The dynamics of the portfolio value, $V_t$, is readily available from equations (\ref{eq:stockindex}), (\ref{eq:CMNI}) and (\ref{eq:CMII})
\begin{align*}
  \frac{dV_t}{V_t} & = w^S \frac{dS_t}{S_t} + w^B \frac{dB_t(\tau_B)}{B_t(\tau_B)} + w^D \frac{dD_t(\tau_D)}{D_t(\tau_D)} + \left(1-w^S-w^B-w^D\right)r_t dt \\[2mm]
           & = \left[ r_t  + w'L \lambda_t \right] dt + w'LdW_t,
\end{align*}
where $w=(w^S,w^B,w^D)'$, $\lambda_t = (\lambda^S_t, \lambda^r_t, \lambda^\pi_t, \lambda^I_t)'$ with $\lambda^S_t=x_t/\sigma_S$, $W_t = (W^S_t,W^r_t,W^\pi_t,W^I_t)'$, and\footnote{Note that we use a different ordering of the underlying Brownian motions here that in Proposition~\ref{prop:conddist}. This is purely for aesthetical reasons, the structure of $L$ is (visually) simpler when the risk factors relating to nominal and inflation bonds are grouped together.}
\begin{align}
   L =  \begin{pmatrix}
   \sigma_S   &    0                         &  0 & 0   \\
    0         &    -\Psi(a,\tau_B)\sigma_r   &  0 & 0   \\
    0         &    -\Psi(a,\tau_D)\sigma_r   &  \Psi(k,\tau_D)\sigma_\pi & \sigma_I
   \end{pmatrix}.
\end{align}

The (instantaneous) Sharpe ratio is defined as the ratio of the (annualized) expected excess return to the (annualized) volatility, i.e.\
\begin{align}
   R_t = \frac{w'L\lambda_t}{\sqrt{w'LCL'w}},  \label{eq:R}
\end{align}
where the correlation matrix $C$ is given by
\begin{align}
   C =  \begin{pmatrix}
   1           &    \rho_{rS}     & \rho_{S\pi}   & 0   \\
   \rho_{rS}   &      1           & \rho_{r\pi}   & 0   \\
   \rho_{S\pi} &    \rho_{r\pi}   &    1          & 0   \\
   0           &       0          &    0          & 1
   \end{pmatrix}.
\end{align}
The risk-reward measure $R$ is the continuous time analogue to the well-known Sharpe ratio performance measure (in discrete time), and it has a similar interpretation as a measure of effectiveness. For $w=(1,0,0)'$ and $w=(0,1,0)'$ we get $R_t = x_t/\sigma_S$ and $R_t = -\lambda^r_t$, respectively. In these cases $R$ measures the (marginal) risk-reward trade-off for one specific risk source, and $R$ is then also known as the {\it market price of risk} (for that risk source).

Note that the numerator of $R$ is time-dependent, while the denominator is not. This reflects the fact that the assets under consideration have constant volatility over time, while the risk premia are in general stochastic (except for $\lambda^I_t$) . Consequently, the Sharpe ratio is also stochastic.

\begin{proposition}  \label{prop:SR}
For $t\geq 0$, the Sharpe ratio, $R_t$, is normally distributed with
\begin{align}
   \E[R_t] = \frac{w'L\E[\lambda_t]}{\sqrt{w'LCL'w}},  \quad   \Var[R_t] = \frac{w'MV(t)M'w}{w'LCL'w},
\end{align}
where
\begin{align}
\E[\lambda_t] =
    \begin{pmatrix}
      \bar{x}/\sigma_S \\[2mm]
      a(\bar{r}-b)/\sigma_r \\[2mm]
      k(\bar{\pi}-l)/\sigma_\pi \\[2mm]
      h/\sigma_I
   \end{pmatrix}
   +
    \begin{pmatrix}
     e^{-\alpha t}(x_0-\bar{x})/\sigma_S \\[2mm]
     e^{-\kappa t}(a-\kappa)(r_0-\bar{r})/\sigma_r \\[2mm]
     e^{-\beta t}(k-\beta)(\pi_0-\bar{\pi})/\sigma_\pi \\[2mm]
     0
   \end{pmatrix},   \label{eq:meanlambda}
\end{align}
\begin{align}
   M =  \begin{pmatrix}
   \sigma_x   &    0                         &  0    \\[2mm]
    0         &    -\Psi(a,\tau_B)(a-\kappa)\sigma_r   &  0    \\[2mm]
    0         &    -\Psi(a,\tau_D)(a-\kappa)\sigma_r   &  \Psi(k,\tau_D)(k-\beta)\sigma_\pi
   \end{pmatrix},
\end{align}
and
\begin{align}
   V(t) =  \begin{pmatrix}
   \Psi(2\alpha,t)                     &  -\rho_{rS}\Psi(\kappa+\alpha,t)    &   -\rho_{S\pi}\Psi(\alpha+\beta,t)  \\[2mm]
   -\rho_{rS}\Psi(\kappa+\alpha,t)     &    \Psi(2\kappa,t)                  &    \rho_{r\pi}\Psi(\kappa+\beta,t)  \\[2mm]
   -\rho_{S\pi}\Psi(\alpha+\beta,t)    &   \rho_{r\pi}\Psi(\kappa+\beta,t)   &   \Psi(2\beta,t)
   \end{pmatrix}.
\end{align}
\end{proposition}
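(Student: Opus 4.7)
The proof hinges on the simple observation that $R_t$ is an affine function of the risk-premium vector $\lambda_t$ with deterministic coefficients: the denominator $\sqrt{w'LCL'w}$ depends only on model parameters (and the fixed weights and maturities), and the numerator is linear in $\lambda_t$. Therefore, if I can show that $\lambda_t$ is jointly Gaussian, $R_t$ is automatically (univariate) normal, and I only need to compute its mean and variance.

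For Gaussianity, note that $\lambda_t^S = x_t/\sigma_S$, $\lambda_t^r = \{(a-\kappa)r_t + \kappa\bar r - ab\}/\sigma_r$, $\lambda_t^\pi = \{(k-\beta)\pi_t + \beta\bar\pi - kl\}/\sigma_\pi$, and $\lambda_t^I = h/\sigma_I$. All are affine in $(r_t,x_t,\pi_t)$, which is a sub-vector of $Y_t$ from Proposition~\ref{prop:conddist}; hence $\lambda_t$ is Gaussian. Plugging the integral-representation means $\E[r_t]=\bar r + e^{-\kappa t}(r_0-\bar r)$, $\E[x_t]=\bar x + e^{-\alpha t}(x_0-\bar x)$, $\E[\pi_t]=\bar\pi + e^{-\beta t}(\pi_0-\bar\pi)$ into these affine expressions, simple algebra (e.g.\ $(a-\kappa)\bar r + \kappa\bar r = a\bar r$) collapses to the decomposition of $\E[\lambda_t]$ in (\ref{eq:meanlambda}) as a stationary part plus a transient part. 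Taking expectations through $R_t$ then yields the claimed formula for $\E[R_t]$.

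For the variance, the central calculation is to factor the random part of $L\lambda_t$ through the matrix $M$. Introduce the normalized centred vector
\begin{align*}
   \xi_t \;=\; \bigl((x_t-\E[x_t])/\sigma_x,\; (r_t-\E[r_t])/\sigma_r,\; (\pi_t-\E[\pi_t])/\sigma_\pi\bigr)'.
\end{align*}
Using $\lambda_t^S - \E[\lambda_t^S] = (x_t-\E[x_t])/\sigma_S$, $\lambda_t^r - \E[\lambda_t^r] = (a-\kappa)(r_t-\E[r_t])/\sigma_r$, $\lambda_t^\pi - \E[\lambda_t^\pi] = (k-\beta)(\pi_t-\E[\pi_t])/\sigma_\pi$, $\lambda_t^I - \E[\lambda_t^I]=0$, I verify row-by-row that $L(\lambda_t - \E[\lambda_t]) = M\xi_t$: the $\sigma_S$ in the first row of $L$ cancels against the $1/\sigma_S$ inside $\lambda_t^S$ to leave $\sigma_x$ (first entry of the top row of $M$), and similarly the interest-rate and inflation rows pick up the extra factors $(a-\kappa)$ and $(k-\beta)$ that distinguish $M$ from $L$.

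It remains to identify $\Cov(\xi_t) = V(t)$. Each entry of $V(t)$ is read off Table~\ref{tab:moments} after dividing the covariance of the relevant pair among $\{x_t,r_t,\pi_t\}$ by the appropriate product of $\sigma$'s: e.g.\ $\Cov(x_t,r_t)/(\sigma_x\sigma_r) = -\rho_{rS}\Psi(\kappa+\alpha,t)$, etc. Combining these pieces,
\begin{align*}
   \Var(R_t) \;=\; \frac{w'L\,\Cov(\lambda_t)\,L'w}{w'LCL'w} \;=\; \frac{w'M V(t) M'w}{w'LCL'w},
\end{align*}
as claimed. The only genuine work is the bookkeeping in verifying $L(\lambda_t-\E[\lambda_t]) = M\xi_t$ and matching the six independent entries of $V(t)$ to Table~\ref{tab:moments}; there is no analytic difficulty beyond that, since Gaussianity of $R_t$ follows automatically from its affine-in-Gaussian structure.
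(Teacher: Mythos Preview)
Your proof is correct and follows essentially the same route as the paper's own argument: normality of $R_t$ because it is affine in the jointly Gaussian vector $(x_t,r_t,\pi_t)$ from Proposition~\ref{prop:conddist}, the mean read off from Table~\ref{tab:moments}, and the variance obtained via the factorisation $L\Cov(\lambda_t)L' = MV(t)M'$. The only difference is that you make the factorisation explicit by introducing the normalized vector $\xi_t$ and verifying $L(\lambda_t-\E[\lambda_t]) = M\xi_t$ row by row, whereas the paper simply asserts the identity and leaves its verification to the reader.
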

\begin{proof}
Recall that the risk premia are defined by $\lambda^S_t=x_t/\sigma_S$, $\lambda^r_t=\{(a-\kappa)r_t + \kappa \bar{r} - a b\}/\sigma_r$, $\lambda_t^\pi=\{(k-\beta)\pi_t + \beta\bar{\pi} - kl\}/\sigma_\pi$, and $\lambda_t^I=h/\sigma_I$. In combination with (\ref{eq:R}) it follows that $R_t$ is a linear combination of the state variables $x_t$, $r_t$ and $\pi_t$. By Proposition~\ref{prop:conddist} $(x_t,r_t,\pi_t)$ is jointly normal, and $R_t$ is therefore also normally distributed. The stated expressions for the mean and variance follows from Table~\ref{tab:moments} and the relation $L\Var[\lambda_t]L'=MV(t)M'$, which can easily be verified.
\end{proof}

In the typical (stationary) situation where $\alpha$, $\kappa$ and $\beta$ are all positive, the last vector on the right-hand side of (\ref{eq:meanlambda}) converges to zero as $t$ tends to infinity. Hence, the first vector on the right-hand side of (\ref{eq:meanlambda}) represents the long-term (stationary) risk premia. Note that the last vector has entries of zero if the state processes $x$, $r$ and $\pi$ are started at their stationary mean, or---in the case of $r$ and $\pi$---if the mean reversion parameters are the same under $P$ and $Q$ ($a=\kappa$ and $k=\beta$).

\subsection{Factor investing}
In the factor investing paradigm the focus is on exposure to (common) risk factors, rather than on exposure to specific assets classes. This idea is well illustrated in the present model where the common factors are easily  identified. For a general treatment of factor investing the interested reader is referred to \cite{ang14}.

Assume the investor wants exposure to equity risk, interest rate risk and inflation risk. Assume further that the risk distribution to each of these factors is given by the triplet $f=(f^E,f^R,f^I)$. First, we need to decide which risk sources contribute to the different factors. Due to the simplicity of the model this is rather straightforward: risk originating from $W^S$ is equity risk, $W^r$ is interest rate risk, and $W^\pi$ and $W^I$ are inflation risk. Second, we need to decide which risk measure to use and how to aggregate risk from more than one source. Since we are working with Gaussian processes where (essentially) all risk measures are equivalent the obvious choice is to use volatility as risk measure. Regarding risk aggregation, we need to combine the risk from $W^\pi$ and $W^I$ into one inflation risk measure. Again, due to risk sources being Gaussian (and independent by assumption) it seems a natural choice to use the volatility from the combined impact. Note, however, under other distributional assumptions an argument could also be made for other ways to combine risk, or indeed other risk measures, e.g.\ expected shortfall.

Considering the same assets as in the previous section we want to find the (relative) allocation that gives the desired factor exposure. With the choices made above, this is equivalent to solving $w'\tilde{L}=f$ for $w$, where
\begin{align}
    \tilde{L} =  \begin{pmatrix}
   \sigma_S   &    0                         &  0    \\
    0         &    \Psi(a,\tau_B)\sigma_r   &  0    \\
    0         &    \Psi(a,\tau_D)\sigma_r   &  \sqrt{\Psi^2(k,\tau_D)\sigma_\pi^2 + \sigma_I^2}
   \end{pmatrix}.
\end{align}
Note that the entries corresponding to interest rate risk are positive in $\tilde{L}$, but negative in $L$. This is simply to conform with the convention that a long position in a nominal bond has positive interest rate risk (or, equivalently, that the underlying risk source is $-W^r_t$).

In general, solving for $w$ requires inverting $\tilde{L}$, but in the present case the structure is so simple that the solution can be written down directly
\begin{align}
  w^S & = \frac{f^E}{\sigma_S}, \label{eq:wS} \\[2mm]
  w^B & = \frac{f^R}{\Psi(a,\tau_B)\sigma_r} - \frac{\Psi(a,\tau_D)}{\Psi(a,\tau_B)}\frac{f^I}{\sqrt{\Psi^2(k,\tau_D)\sigma_\pi^2 + \sigma_I^2}},  \label{eq:wB} \\[2mm]
  w^D & = \frac{f^I}{\sqrt{\Psi^2(k,\tau_D)\sigma_\pi^2 + \sigma_I^2}}.  \label{eq:wD}
\end{align}
The solution (\ref{eq:wS})--(\ref{eq:wD}) gives the relative cash allocation to the three assets. This allocation can then be scaled to meet a desired total risk target; if the desired risk target is given in terms of a volatility target of, say, $\sigma_{tot}$, then each of the $w$'s is multiplied by the factor  $\sigma_{tot}/\sqrt{w'LCL'w}$.

Note, the joint distribution enters only in the second stage when the total risk is scaled to the desired level. In the first stage we consider only the marginal risk contribution from each factor. Of course, the (relative) factor risk target, $f$, may itself be the result of a joint analysis of the factors.

\subsection{Example}
Let us consider a factor investor with factor risk allocation $(f^E,f^R,f^I)=(40,40,20)$. Let us further assume that interest rate and inflation risk are taken by investing in constant-maturity indices with maturities $\tau_B=10$ and $\tau_D=15$ years, respectively. We consider the capital market model with parameters given by Table~\ref{tab:cappar}. We note that the market price of interest rate risk is initially negative. It does however depend (positively) on the interest rate level, and it is therefore expected to increase over time. The market price of both expected ($\lambda^\pi$) and unexpected ($\lambda^I$) inflation risk are also negative. These market prices are constant and hence negative at all horizons.

\begin{table}
\centering
\rowcolors{2}{gray!25}{white}
\bgroup
\def\arraystretch{1.3}
\begin{tabular}{C{1.2cm}C{1.2cm}C{1.2cm}C{1.2cm}C{1.2cm}C{1.2cm}C{1.2cm}C{1.2cm}} \hline
  \rowcolor{gray!50}
  \multicolumn{2}{c}{Rate}  &  \multicolumn{2}{c}{Stock} &  \multicolumn{2}{c}{Inflation}  &  \multicolumn{2}{c}{Correlations} \\ \hline
    $\kappa$    &  0.09     &   $\alpha$   &  0.06       &   $\beta$       &   0.05        &   $\rho_{rS}$     & 0       \\
    $\bar{r}$   &  0.0275   &   $\bar{x}$  &  0.045      &   $\bar{\pi}$   &   0.015        &   $\rho_{r\pi}$   & 0.80    \\
    $\sigma_r$  &  0.01     &   $\sigma_x$ &  0.007      &   $\sigma_\pi$  &   0.005       &   $\rho_{S\pi}$   & -0.25   \\
                &           &   $\sigma_S$ &  0.15       &   $\sigma_I$    &   0.005       &                   &         \\ \hline
    $a$         &  0.03    &              &             &   $k$           &    0.05       &                   &         \\
    $b$         &  0.065    &              &             &   $l$           &    0.02       &                   &         \\
                &           &              &             &   $h$           &    -0.001    &                   &         \\ \hline
  \rowcolor{gray!50}
  \multicolumn{8}{c}{Initial values of state variables and market prices of risk}  \\  \hline
  $r_0$         &   0.005   &     $x_0$           & 0.03        &   $\pi_0$               &    0      &                    &         \\
  $-\lambda^r_0$ &  -0.0225   &     $\lambda^S_0$   & 0.20        &   $\lambda^\pi_0$       &   -0.05   &   $\lambda^I_0$    &    -0.20     \\ \hline
\end{tabular}
\egroup
\caption{Capital market parameters used for illustration of Sharpe ratios in Figure~\ref{fig:SRexample}. Top panel shows parameters governing the physical dynamics ($P$-measure), middle panel shows pricing parameters ($Q$-measure), and bottom panel shows initial values for the state variables and market prices of risk. The market prices of risk are given by $-\lambda^r_t = 6 r_t - 0.0525$, $\lambda^S_t = x_t/0.15$, $\lambda^\pi_t = -0.05$, and $\lambda^I_t=-0.2$.}
\label{tab:cappar}
\end{table}

We first calculate the corresponding cash allocation by use of (\ref{eq:wS})--(\ref{eq:wD}). Scaling the weights to sum to $100$, we find $(w^S,w^B,w^D)=(46,-11,65)$. Since inflation is rather stable it requires a large (cash) exposure to meet the desired inflation risk target. The inflation bond exposure entails a large interest rate risk also, and it is therefore necessary to short nominal bonds. The shorting is solely an effect of risk contributions, and is unrelated to the market prices of risk.

We note that initially the exposure to nominal rates and inflation both {\it reduce} the expected return of the portfolio. The nominal bond index has a volatility of $\Psi(a,\tau_B)\sigma_r=8.6\%$ and an initial excess return of $-\lambda^r_0\Psi(a,\tau_B)\sigma_r = -0.19\%$. The inflation bond index has a volatility of $8.5\%$ and an initial excess return of $-0.64\%$, yielding an initial Sharpe ratio of $-0.64\%/8.5\%=-0.075$.\footnote{By (\ref{eq:CMII}) the variance and excess return are given by $\Psi^2(a,\tau_D)\sigma_r^2+\Psi^2(k,\tau_D)\sigma_\pi^2-2\Psi(a,\tau_D)\sigma_r\Psi(k,\tau_D)\sigma_\pi\rho_{r\pi} + \sigma_I^2 = (12.1\%)^2 + (5.3\%)^2 - 2(12.1\%)(5.3\%)0.8 + (0.5\%)^2 = (8.5\%)^2$, and
$-\lambda^r_0\Psi(a,\tau_D)\sigma_r + \lambda^\pi_0\Psi(k,\tau_D)\sigma_\pi + \lambda^I_0\sigma_I = -0.272\% - 0.264\% - 0.100\% = -0.64\%$, respectively.}

The left plot of Figure~\ref{fig:SRexample} shows the expected Sharpe ratio over time for the three assets in the portfolio and for the portfolio itself. The interest rate and the expected excess return for stocks ($x$) are both started below their stationary levels and hence tend to drift upwards. The expected Sharpe ratio for all assets, and hence for the portfolio, is therefore increasing over time. The right plot of Figure~\ref{fig:SRexample} shows the Sharpe ratio distribution at time 15 for the three assets and the portfolio. As seen there is substantial variation around the expected value. Perhaps surprisingly, the portfolio Sharpe ratio is more volatile than the Sharpe ratio for both stocks and nominal bonds, but less volatile than for inflation bonds. The Sharpe ratio distribution is calculated by use of Proposition~\ref{prop:SR}.

\begin{figure}[h]
\begin{center}
\includegraphics[height=7.5cm]{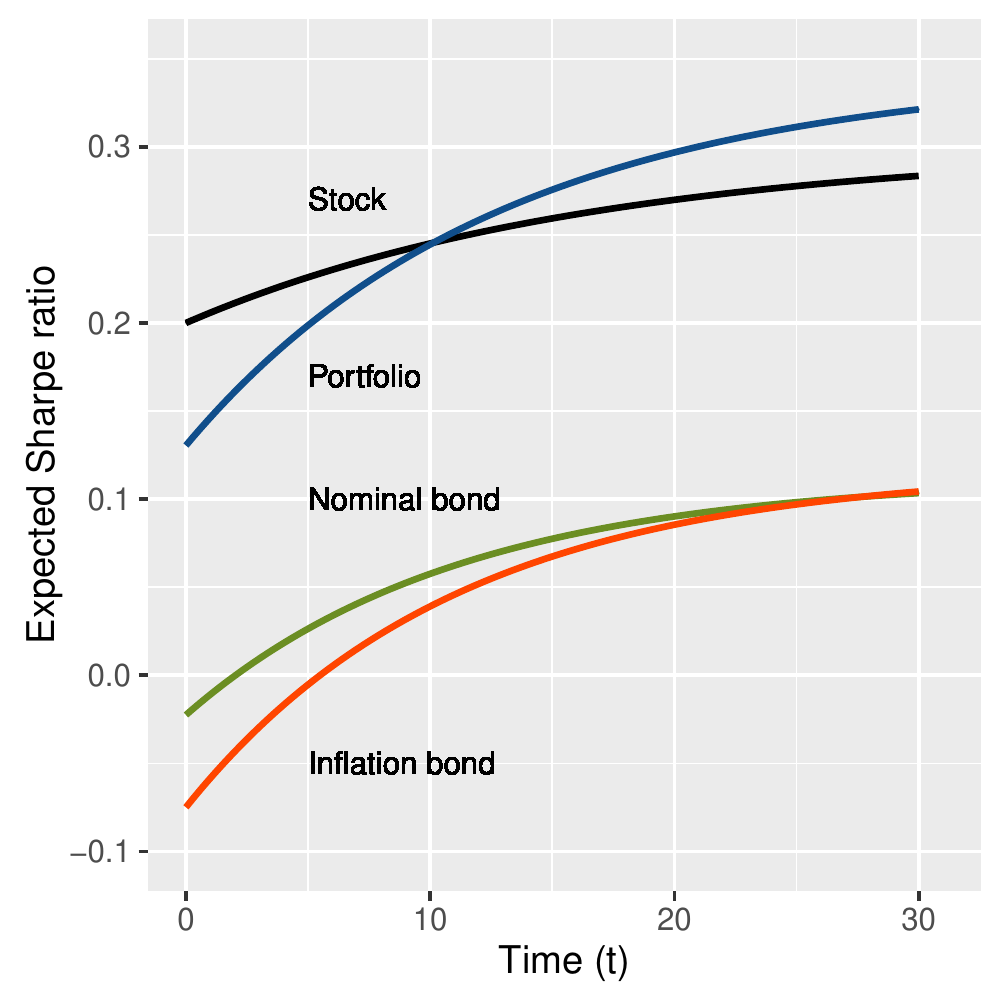}   
\hfill
\includegraphics[height=7.5cm]{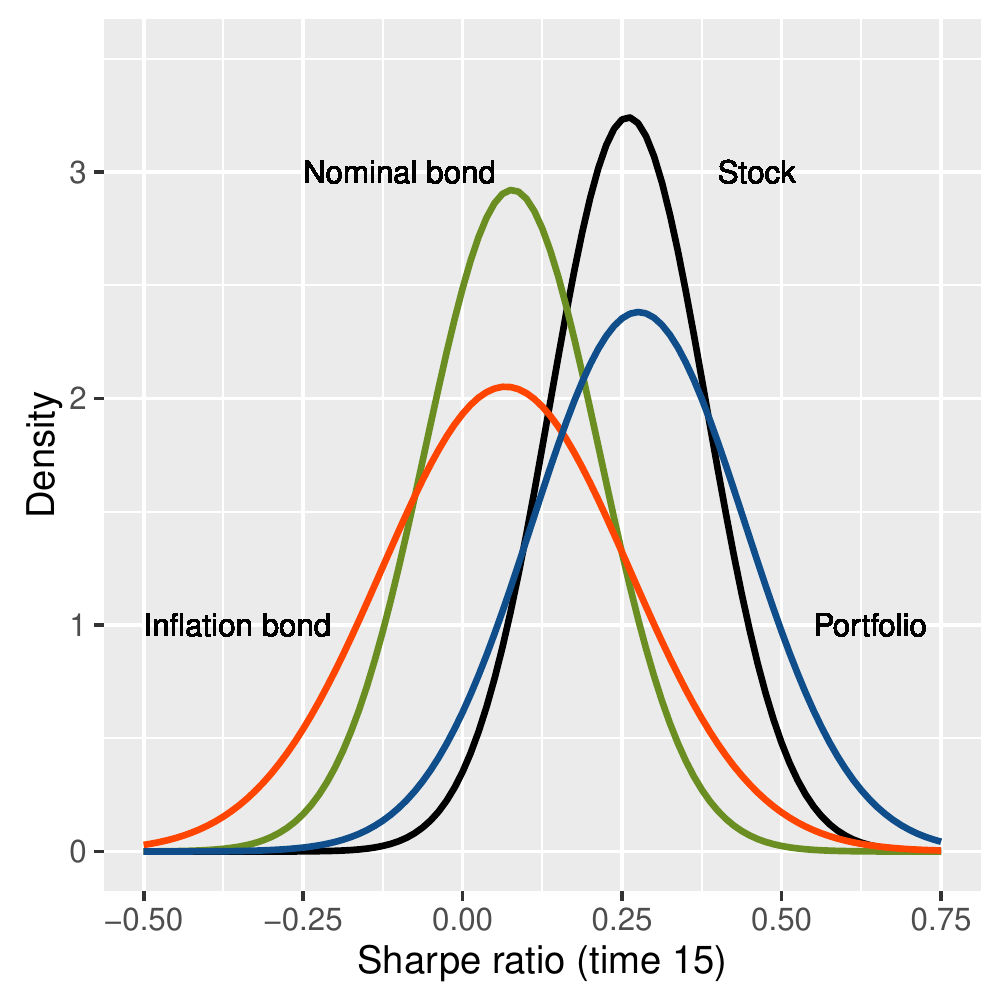}   
\end{center}
\vspace*{-5mm}
\caption{Left plot shows the expected Sharpe ratio for stocks, a nominal bond index (10 years), an inflation bond index (15 years) and a portfolio with factor risk allocation $(f^E,f^R,f^I)=(40,40,20)$.
Right plot shows the Sharpe ratio distribution for the same assets at time 15. Capital market parameters, including initial values, are given in Table~\ref{tab:cappar}.}
\label{fig:SRexample}
\end{figure}

\section{Concluding remarks} \label{sec:concluding}
In this paper we have analysed the five-factor capital market model of \cite{munetal04} with the addition of break-even inflation curves. In particular, we have derived the distributional results which allow for exact simulation from the model. We have also derived a number of auxiliary results regarding constant-maturity bond indices, real value of stocks and portfolio properties.

The application we have in mind is evaluation of different investment strategies via simulation. Consequently, we interpret the model as specifying the P-dynamics ('real world' dynamics) of the state variables. The P-dynamics of nominal and inflation bonds follow when, in addition, we specify market prices of risk and impose absence of arbitrage. Whether we view the model as a P- or Q-model is, however, merely a matter of viewpoint and how we interpret the parameters. If so desired, we could alternatively think of the model as specifying the Q-dynamics (pricing dynamics) and use it for pricing purposes instead.

We have stated a number of results which can be used to illustrate and explore model properties, e.g.\ the distribution of stocks in both nominal and real terms and the distribution of Sharpe ratios for individual assets and portfolios. We envisage that model parameters are chosen to reflect either long-term investment beliefs, or alternatively to reflect more pessimistic views as part of a stress analysis. In either way, we think of a situation where the parameters are chosen by the user and the model is then used to generate a coherent set of stochastic scenarios.

Alternatively, the model can be calibrated in a more 'objective' way by estimating the parameters from data. If all state variables were observable and observed at a set of discrete points in time, we could use Proposition~\ref{prop:conddist} to write down the likelihood function and estimate the parameters by standard MLE. In practice, however, the state variables are only partly observable, since the excess return on stocks ($x_t$) and the expected rate of inflation ($\pi_t$) are not directly available. There are essentially two ways to proceed. Either to find predictor variables, e.g.\ the dividend-price ratio and census views on inflation, for the unobservable quantities, or to apply a Kalman filter to compute the 'observable' likelihood function to be maximized treating $x$ and $\pi$ as truly unobservable. The latter approach has the added benefit of yielding an estimate of the excess return and expected rate of inflation as part of the estimation process. For general information on state space models and Kalman filtering we refer the interested reader to \cite{har89} or \cite{capetal05}, see also \cite{munetal04} for a description of how to use Kalman filtering to estimate the present model, including specific parameter estimates obtained from US data from March 1951 to January 2003.

\appendix
\section{Computation of variances and covariances}
\label{app:computation}
Some care needs to be taken in order to handle the full parameter space, specifically vanishing mean reversion parameters. To state formulas valid for all parameter values we make use of a number of auxiliary functions. After some preliminary definitions, we calculate the variance terms and the first two rows/columns of the variance-covariance matrix given in Table~\ref{tab:moments} on page~\pageref{tab:moments}. The remaining terms of the matrix follow the same pattern and will not be explicitly derived.

For ease of reference in the following we repeat the definition of $\Psi$ from equation (\ref{eq:psidef})
\begin{align}
  \Psi(\kappa,t) \equiv \int_0^t e^{-\kappa u}du =
  \begin{cases}
  t & \mbox{for } \kappa = 0, \\
  \frac{1}{\kappa}\left(1-e^{-\kappa t} \right) & \mbox{for } \kappa \neq 0. \\
  \end{cases}
\end{align}

\subsection{Preliminaries}  \label{sec:prelim}
First, let us recall the Taylor series expansion of $\kappa \rightarrow e^{-\kappa t}$ at $\kappa=0$
\begin{align}
   e^{-\kappa t} = 1 -  \kappa t + \frac{1}{2}t^2\kappa^2 - \frac{1}{6}t^3 \kappa^3 + o(\kappa^3),
\end{align}
where the last term denotes a function with the property that $o(\kappa^3)/\kappa^3 \rightarrow 0$ as $\kappa$ tends to $0$. We can use this expression to show the following limit results in $\kappa$ (for fixed $t$)
\begin{align*}
   \frac{\Psi(\kappa,t)-t}{\kappa} & = \frac{1}{\kappa}\left[\frac{1}{\kappa}\left(1-e^{-\kappa t} \right) - t \right] \\[2mm]
       & = \frac{1}{\kappa}\left[t - \frac{1}{2}t^2\kappa + o(\kappa) - t \right] \to -\frac{t^2}{2} \mbox{ for }\kappa \to 0,
\end{align*}
and
\begin{align*}
  \frac{2}{\kappa}\left[\frac{\Psi(2\kappa,t)-t}{2\kappa} - \frac{\Psi(\kappa,t)-t}{\kappa} \right]
  = \frac{2}{\kappa}\left[\frac{1}{6}t^3(2\kappa) - \frac{1}{6}t^3\kappa + o(\kappa)\right] \to \frac{t^3}{3} \mbox{ for }\kappa \to 0.
\end{align*}

Second, it turns out that many of the variance and covariance expressions take the form of differentials, and differences between differentials, of $\Psi$.
With slight abuse of notation we therefore define the following two functions
\begin{align}
  \Theta(\kappa,t) \equiv \frac{t-\Psi(\kappa,t)}{\kappa} =
  \begin{cases}
  \frac{t^2}{2} & \mbox{for } \kappa = 0, \\
  \frac{1}{\kappa^2}\left(-1 + \kappa t + e^{-\kappa t}\right) & \mbox{for } \kappa \neq 0, \label{eq:Thetadef} \\
  \end{cases}
\end{align}
and
\begin{align}
  \Upsilon(\kappa,t) \equiv \frac{2}{\kappa}\left[ \Theta(\kappa,t)-\Theta(2\kappa,t) \right]
   =
  \begin{cases}
  \frac{t^3}{3} & \mbox{for } \kappa = 0, \\
  \frac{1}{2\kappa^3}\left(-3 + 2\kappa t + 4e^{-\kappa t} - e^{-2\kappa t} \right) & \mbox{for } \kappa \neq 0. \label{eq:Upsilondef} \\
  \end{cases}
\end{align}
Strictly speaking the defining equality in (\ref{eq:Thetadef}) and (\ref{eq:Upsilondef}) is valid only for $\kappa \neq 0$, but the limit results show that
$\Theta$ and $\Upsilon$ might be extended by continuity at $\kappa=0$ with the stated value.

With the same slight abuse of notation we also define the following continuous functions
\begin{align}
  \Gamma(\kappa,\alpha,t) \equiv \frac{\Psi(\alpha,t)-\Psi(\alpha+\kappa,t)}{\kappa}
   =
  \begin{cases}
  \frac{t^2}{2}  & \mbox{for } \kappa = \alpha = 0, \\
  \frac{1}{\alpha^2}\left(1-e^{-\alpha t} - \alpha t e^{-\alpha t}\right)   & \mbox{for } \kappa=0, \alpha \neq 0, \\
  \frac{1}{\kappa}\left(\Psi(\alpha,t)-\Psi(\alpha+\kappa,t) \right) & \mbox{for } \kappa \neq 0. \label{eq:Gammadef} \\
  \end{cases}
\end{align}
The middle case follows by noting that for $\alpha \neq 0$ the ratio converges to $-\partial\Psi/\partial\kappa_{|\kappa=\alpha}$ for $\kappa$ tending to $0$. Also note, to ensure that the expression is valid also for the special cases  $\alpha=0$ and $\alpha+\kappa=0$ we do not expand $\Psi$ in the last case of (\ref{eq:Gammadef}).  The following special cases relates $\Gamma$ to the functions already defined, $\Gamma(\kappa,0,t) = \Theta(\kappa,t)$ and $\Gamma(\kappa,\kappa,t)= \frac{1}{2}\Psi^2(\kappa,t)$.

Finally, we define
\begin{align}
  \Lambda(\kappa,\alpha,t) & \equiv \frac{\Psi(\alpha+\kappa,t)-\Psi(\alpha,t)-\Psi(\kappa,t)+t}{\alpha\kappa} \nonumber \\[2mm]
  & =
  \begin{cases}
  \frac{t^3}{3}  & \mbox{for } \kappa = \alpha = 0, \\
  \frac{1}{2\alpha^3}\left(-2+t^2\alpha^2 + 2e^{-\alpha t} + 2 \alpha t e^{-\alpha t}\right)   & \mbox{for } \kappa=0, \alpha \neq 0, \\
  \frac{1}{2\kappa^3}\left(-2+t^2\kappa^2 + 2e^{-\kappa t} + 2 \kappa t e^{-\kappa t}\right)   & \mbox{for } \kappa \neq 0, \alpha = 0, \\
  \frac{1}{\alpha\kappa}\left(\Psi(\alpha+\kappa,t)-\Psi(\alpha,t)-\Psi(\kappa,t)+t \right) & \mbox{for }   \kappa \neq 0, \alpha \neq 0. \label{eq:Lambdadef} \\
  \end{cases}
\end{align}
Note that $\Lambda$ is symmetric in $\alpha$ and $\kappa$, i.e.\ $\Lambda(\kappa,\alpha,t)=\Lambda(\alpha,\kappa,t)$. The second case follows by noting that when $\alpha\neq 0$ the ratio converges to $(\Theta(0,t) + \partial\Psi/\partial\kappa_{|\kappa=\alpha})/\alpha$ for $\kappa$ tending to $0$. The third case follows by symmetry. To ensure that the expression is valid also for the special case $\alpha+\kappa=0$ we do not expand $\Psi$ in the last case of (\ref{eq:Lambdadef}). Also note that $\Lambda$ is a generalisation of $\Upsilon$ in the sense that $\Lambda(\kappa,\kappa,t)=\Upsilon(\kappa,t)$.

Now, with these definitions we have for all $\alpha$ and $\kappa$
\begin{align}
  \frac{\partial}{\partial t}\Theta(\kappa,t) = \Psi(\kappa,t), & \quad \frac{\partial}{\partial t}\Upsilon(\kappa,t) = \Psi^2(\kappa,t),  \\
    \frac{\partial}{\partial t}\Gamma(\kappa,\alpha,t) = e^{-\alpha t}\Psi(\kappa,t), & \quad \frac{\partial}{\partial t}\Lambda(\kappa,\alpha,t) = \Psi(\alpha,t)\Psi(\kappa,t), \quad
\end{align}
which, together with $\Theta(\kappa,0) =  \Upsilon(\kappa,0) =  \Gamma(\kappa,\alpha,0) =  \Lambda(\kappa,\alpha,0) = 0$, shows
\begin{align}
  \int_0^t \Psi(\kappa,s)ds & = \Theta(\kappa,t),   \label{eq:intpsi} \\[2mm]
  \int_0^t \Psi^2(\kappa,s)ds & = \Upsilon(\kappa,t), \label{eq:intpsi2}\\[2mm]
  \int_0^t e^{-\alpha s}\Psi(\kappa,s)ds & = \Gamma(\kappa,\alpha,t), \label{eq:intexppsi}\\[2mm]
  \int_0^t \Psi(\alpha,s)\Psi(\kappa,s)ds & = \Lambda(\kappa,\alpha,t), \label{eq:intpsipsi}
\end{align}

\subsection{Variances}
From (\ref{eq:rsol}) we have
\begin{align}
  \Var\left[r_t\right] & = \Var\left[\bar{r} + e^{-\kappa t}(r_0 - \bar{r}) + \sigma_r \int_0^t e^{-\kappa(t-s)}dW^r_s\right] \nonumber \\[2mm]
  & = \sigma_r^2\Var\left[\int_0^t e^{-\kappa(t-s)}dW^r_s\right]  \nonumber \\[2mm]
  & = \sigma_r^2 \int_0^t e^{-2\kappa(t-s)}ds \nonumber \\[2mm]
  & = \sigma_r^2 \int_0^t e^{-2\kappa s}ds \nonumber \\[2mm]
  & = \sigma_r^2 \Psi(2\kappa, t), \label{eq:V11}
\end{align}
and further from (\ref{eq:intrsol}) and (\ref{eq:intpsi2})
\begin{align}
  \Var\left[\int_0^t r_s ds\right] & = \Var\left[t \bar{r} + (r_0-\bar{r})\Psi(\kappa,t) + \sigma_r \int_0^t \Psi(\kappa,t-s) dW^r_s\right] \nonumber \\[2mm]
  & = \sigma_r^2 \Var\left[ \int_0^t \Psi(\kappa,t-s) dW^r_s\right] \nonumber \\[2mm]
  & = \sigma_r^2 \int_0^t \Psi^2(\kappa,t-s) ds \nonumber \\[2mm]
  & = \sigma_r^2 \int_0^t \Psi^2(\kappa,s) ds \nonumber \\[2mm]
  & = \sigma_r^2 \Upsilon(\kappa,t). \label{eq:V22}
\end{align}
Similarly, we find from (\ref{eq:xsol}), (\ref{eq:pisol}), (\ref{eq:intxsol}), and (\ref{eq:intpisol})
\begin{align}
  \Var\left[x_t\right]              & = \sigma_x^2 \Var\left[\int_0^t e^{-\alpha(t-s)}dW^S_s \right] = \sigma_x^2 \Psi(2\alpha, t),      \label{eq:V33} \\[2mm]
  \Var\left[\int_0^t x_s ds\right]  & = \sigma_x^2 \Var\left[\int_0^t \Psi(\alpha,t-s) dW^S_s\right] = \sigma_x^2 \Upsilon(\alpha,t),    \label{eq:V44} \\[2mm]
  \Var\left[\pi_t\right]              & = \sigma_\pi^2\Var\left[ \int_0^t e^{-\beta(t-s)}dW^\pi_s\right] = \sigma_\pi^2 \Psi(2\beta, t),   \label{eq:V55} \\[2mm]
  \Var\left[\int_0^t \pi_s ds\right]  & = \sigma_\pi^2 \Var\left[ \int_0^t \Psi(\beta,t-s) dW^\pi_s\right] = \sigma_\pi^2 \Upsilon(\beta,t). \label{eq:V66}
\end{align}
Finally, we state for completeness
\begin{align}
  \Var\left[W^S_t\right]       = t.      \label{eq:V77}
\end{align}

\subsection{Covariances involving $r_t$}
From (\ref{eq:rsol}) and (\ref{eq:intrsol}) we find
\begin{align}
  \Cov\left[r_t,\int_0^t r_sds \right]
   & = \Cov\left[\sigma_r \int_0^t e^{-\kappa(t-s)}dW^r_s,\sigma_r \int_0^t \Psi(\kappa,t-s) dW^r_s \right] \nonumber \\[2mm]
   & = \sigma_r^2 \int_0^t e^{-\kappa(t-s)}\Psi(\kappa,t-s) ds \nonumber \\[2mm]
   & = \sigma_r^2 \int_0^t e^{-\kappa s}\Psi(\kappa,s) ds \nonumber \\[2mm]
   & = \sigma_r^2 \Gamma(\kappa,\kappa,t),   \label{eq:V12}
\end{align}
where the last equality uses (\ref{eq:intexppsi}).

From (\ref{eq:rsol}) and (\ref{eq:xsol}) we get
\begin{align}
  \Cov\left[r_t, x_t \right]
  & = \Cov\left[\sigma_r \int_0^t e^{-\kappa(t-s)}dW^r_s, -\sigma_x \int_0^t e^{-\alpha(t-s)}dW^S_s\right] \nonumber \\[2mm]
  & = -\sigma_r \sigma_x \rho_{rS} \int_0^t e^{-(\alpha+\kappa)(t-s)}ds \nonumber \\[2mm]
& = -\sigma_r \sigma_x \rho_{rS} \int_0^t e^{-(\alpha+\kappa)s} ds \nonumber \\[2mm]
& = -\sigma_r \sigma_x \rho_{rS} \Psi(\alpha+\kappa,t),  \label{eq:V13}
\end{align}
where $\rho_{rS}$ denotes the correlation between $W^r$ and $W^S$.

From (\ref{eq:rsol}), (\ref{eq:intxsol}) we get
\begin{align}
  \Cov\left[r_t, \int_0^t x_s ds \right]
  & = \Cov\left[\sigma_r \int_0^t e^{-\kappa(t-s)}dW^r_s, - \sigma_x \int_0^t \Psi(\alpha,t-s) dW^S_s\right] \nonumber \\[2mm]
  & = -\sigma_r \sigma_x \rho_{rS} \int_0^t e^{-\kappa(t-s)}\Psi(\alpha,t-s)ds \nonumber \\[2mm]
  & = -\sigma_r \sigma_x \rho_{rS} \int_0^t e^{-\kappa s}\Psi(\alpha,s)ds \nonumber \\[2mm]
  & = -\sigma_r \sigma_x \rho_{rS} \Gamma(\alpha,\kappa,t). \label{eq:V14}
\end{align}

Similarly, we get using (\ref{eq:pisol}) and (\ref{eq:intpisol})
\begin{align}
  \Cov\left[r_t, \pi_t \right] & = \sigma_r \sigma_\pi \rho_{r\pi} \Psi(\beta+\kappa,t),  \label{eq:V15} \\[2mm]
  \Cov\left[r_t, \int_0^t \pi_s ds \right] & = \sigma_r \sigma_\pi \rho_{r\pi} \Gamma(\beta,\kappa,t).  \label{eq:V16}
\end{align}
The covariance between $r_t$ and $W^S_t$ is given by
\begin{align}
  \Cov\left[r_t, W^S_t \right]
   = \Cov\left[\sigma_r \int_0^t e^{-\kappa(t-s)}dW^r_s, \int_0^t 1 dW^S_s\right] = \sigma_r \rho_{rS} \Psi(\kappa,t).  \label{eq:V17}
\end{align}

\subsection{Covariances involving $\int_0^t r_s ds$}
From (\ref{eq:intrsol}) and (\ref{eq:xsol}) we get
\begin{align}
  \Cov\left[\int_0^t r_sds, x_t \right]
   & = \Cov\left[\sigma_r \int_0^t \Psi(\kappa,t-s) dW^r_s,-\sigma_x \int_0^t e^{-\alpha(t-s)}dW^S_s \right] \nonumber \\[2mm]
  & = -\sigma_r \sigma_x \rho_{rS} \int_0^t e^{-\alpha(t-s)}\Psi(\kappa,t-s)ds \nonumber \\[2mm]
  & = -\sigma_r \sigma_x \rho_{rS} \int_0^t e^{-\alpha s}\Psi(\kappa,s)ds \nonumber \\[2mm]
  & = -\sigma_r \sigma_x \rho_{rS} \Gamma(\kappa,\alpha,t), \label{eq:V23}
\end{align}
and from (\ref{eq:intrsol}) and (\ref{eq:intxsol}) we get
\begin{align}
  \Cov\left[\int_0^t r_sds, \int_0^t x_sds \right]
   & = \Cov\left[\sigma_r \int_0^t \Psi(\kappa,t-s) dW^r_s,- \sigma_x \int_0^t \Psi(\alpha,t-s) dW^S_s \right] \nonumber \\[2mm]
  & = -\sigma_r \sigma_x \rho_{rS} \int_0^t \Psi(\alpha,t-s)\Psi(\kappa,t-s)ds \nonumber \\[2mm]
  & = -\sigma_r \sigma_x \rho_{rS} \int_0^t \Psi(\alpha,s)\Psi(\kappa,s)ds \nonumber \\[2mm]
  & = -\sigma_r \sigma_x \rho_{rS} \Lambda(\kappa,\alpha,t). \label{eq:V24}
\end{align}
Similarly, we get using (\ref{eq:pisol}) and (\ref{eq:intpisol})
\begin{align}
  \Cov\left[\int_0^t r_sds, \pi_t \right] & = \sigma_r \sigma_\pi \rho_{r\pi} \Gamma(\kappa,\beta,t),  \label{eq:V25} \\[2mm]
  \Cov\left[\int_0^t r_sds, \int_0^t \pi_s ds \right] & = \sigma_r \sigma_\pi \rho_{r\pi} \Lambda(\kappa,\beta,t), \label{eq:V26}
\end{align}
while the covariance between $\int_0^t r_s ds$ and $W^S_t$ is given by
\begin{align}
  \Cov\left[\int_0^t r_sds, W^S_t \right]
   = \Cov\left[\sigma_r \int_0^t \Psi(\kappa,t-s)dW^r_s, \int_0^t 1 dW^S_s\right] = \sigma_r \rho_{rS} \Theta(\kappa,t).  \label{eq:V27}
\end{align}

\pagebreak
\bibliographystyle{plainnat}
\bibliography{capmarket}

\end{document}